\documentclass[aps,prl,10pt,twocolumn,footinbib,superscriptaddress,amsmath,amsfonts,amssymb,floatfix,notitlepage]{revtex4-2} 

\usepackage{CJK}
\usepackage{physics} 
\usepackage{soul} 
\usepackage[caption=false,subrefformat=parens,labelformat=parens]{subfig}
\usepackage{color} 
\usepackage[table,dvipsnames]{xcolor}
\usepackage[colorlinks,linktocpage,bookmarks=false]{hyperref}
\usepackage[normalem]{ulem}
\usepackage[percent]{overpic}
\usepackage[T1]{fontenc}
\usepackage{graphicx}

\usepackage[legacycolonsymbols]{mathtools}
\usepackage{multirow}
\usepackage{tikz}
\usetikzlibrary{arrows.meta,positioning,shapes.geometric}
\usepackage{dsfont}
\usetikzlibrary{arrows.meta, positioning, shapes.geometric, calc}
\usepackage{xcolor}
\usetikzlibrary{arrows.meta, shapes.geometric}
  
\newcommand\myshade{85}
\definecolor{myrulecolor}{RGB}{150,20,0}
\colorlet{mylinkcolor}{violet}
\colorlet{mycitecolor}{YellowOrange}
\colorlet{myurlcolor}{Aquamarine}
\hypersetup{
	linkcolor  = myurlcolor!\myshade!black,
	citecolor  = mycitecolor!\myshade!black,
	urlcolor   =  myrulecolor!\myshade!black
}

\graphicspath{{FIG/}}
\usepackage{multirow} 
\usepackage{makecell}
\usepackage{booktabs}

\AtBeginDocument{
	\heavyrulewidth=.08em
	\lightrulewidth=.05em
	\cmidrulewidth=.03em
	\belowrulesep=.65ex
	\belowbottomsep=0pt
	\aboverulesep=.4ex
	\abovetopsep=0pt
	\cmidrulesep=\doublerulesep
	\cmidrulekern=.5em
	\defaultaddspace=.5em
}

\renewcommand\[{\begin{equation}}
\renewcommand\]{\end{equation}}

\newcommand{\R}{\mathbb{R}}

\usepackage{amsthm}
\newtheorem{theorem}{Theorem}

\newtheorem{lemma}{Lemma}

\newcommand{\prlsection}[1]{\par\vspace{0.7\baselineskip} \noindent \textbf{#1}\ ---\ }

\makeindex

\begin{document} 
\begin{CJK*}{UTF8}{gbsn}
		\title{Symmetry-Protected Pinch Curves in Classical Spin Liquids}
        \author{Takumi Fukushima}  
        \email{tfukushima@issp.u-tokyo.ac.jp}
        \affiliation{Institute for Solid State Physics, University of Tokyo, Kashiwa, Chiba 277-8581, Japan} 
        \affiliation{RIKEN Center for Emergent Matter Science, Wako, Saitama 351-0198, Japan}
        \author{Han Yan (闫寒)} 
        \email{hanyan@issp.u-tokyo.ac.jp}
        \affiliation{Institute for Solid State Physics, University of Tokyo, Kashiwa, Chiba 277-8581, Japan} 
\date{\today}

\begin{abstract}
Classical spin liquids are correlated paramagnets in which local constraints generate extensive degeneracy and emergent gauge structures, often observable as pinch-point singularities in spin structure factors. Here we introduce pinch-curve spin liquids, in which the pinch singularities form one-dimensional algebraic curves in momentum space. Inversion symmetry protects these curves by reducing the singularity condition to two real algebraic constraints in three dimensions, and the geometry of the pinch locus is algebraically programmable. We identify elementary mechanisms for generating straight and curved pinch loci, construct lattice spin models that realize them, and test the predicted structure factors using Monte Carlo simulations. 
We further show that pinch curves can host an infrared Gauss-law transition: the leading local constraint and the associated anisotropic scaling of the structure factor change, even though the singular locus remains one-dimensional.
\end{abstract}

\maketitle
\end{CJK*}
 
\prlsection{Introduction}
Classical spin liquids are correlated paramagnets whose fluctuations are governed by local constraints rather than conventional order.  These constraints often produce algebraic correlations and singularities in spin structure factors~\cite{MoessnerChalker1998PRL,MoessnerChalker1998,IsakovGregorMoessnerSondhi2004,Henley2005,Henley2010,YanBentonNevidomskyyMoessner2024Detailed,YanBentonMoessnerNevidomskyy2024Typology,FangCanoNevidomskyyYan2024}. The canonical example is classical spin ice on the pyrochlore lattice, where the ice rule realizes an emergent electrostatics, or Coulomb phase.  Its characteristic experimental fingerprint is the pinch-point singularity, which reflects the transverse projector imposed by Gauss's law~\cite{BramwellGingras2001,CastelnovoMoessnerSondhi2008,Fennell2009,Henley2010}.

The landscape of such constrained paramagnets has broadened significantly with the development of generalized Gauss's laws.  Higher-rank and higher-form gauge theories, anisotropic Gauss's laws, and type-II fracton Gauss's laws introduce generalized notions of charge conservation and multipole conservation beyond Maxwell's electrodynamics~\cite{Pretko2017Subdimensional,Pretko2017GeneralizedElectromagnetism,GaiottoKapustinSeibergWillett2015,Chung2025PhysRevB,VijayHaahFu2015,VijayHaahFu2016,NandkishoreHermele2019,SlagleKim2017,YanReuther2022}.
The corresponding spin structure factors exhibit richer singular features, including higher-rank pinch points, scale-variant pinch points, and straight pinch lines~\cite{BentonJaubertYanShannon2016,PremVijayChouPretkoNandkishore2018,YanBentonJaubertShannon2020,BentonMoessner2021,NiggemannIqbalReuther2023,DavierGomezAlbarracinRosalesPujolJaubert2025}.

These singularities are not merely different patterns in momentum space. They diagnose distinct emergent gauge structures and therefore distinct physical consequences, including charge mobility, topological sectors, and possible roles as classical parent states for quantum spin liquids, topological orders, and fracton orders~\cite{HermeleFisherBalents2004,SavaryBalents2012,LeeOnodaBalents2012,GingrasMcClarty2014,Pretko2017Subdimensional,Pretko2017GeneralizedElectromagnetism,PhysRevB.97.235112,PhysRevB.98.035111}, as well as topological bands~\cite{YanPohleShannon2018,YanRomhanyiThomasenShannon2024Berry,Yan2023TopologicallyCriticalBand}. Identifying classical spin liquids with new types of singularities therefore reveals  routes to new emergent gauge theories, quantum phases of matter, and constrained dynamics.

Here we expand this landscape by introducing symmetry-protected \emph{pinch-curve spin liquids}.
These are classical spin liquids described by generalized Gauss's laws, in which the correlation singularities form algebraic curves in momentum space. 
The key mechanism is a symmetry-induced reduction in the number of algebraic constraints that define the pinch-singularity locus. 
In particular, inversion symmetry fixes the parity of the Gauss-law operators and reduces the singularity condition to two real constraints in three-dimensional space. Their common zero set generically forms a one-dimensional curve carrying the pinch singularities.  

We develop the theory as follows. First, we derive the symmetry-protection mechanism using the scalar-charge Gauss's law Hamiltonian. We then show how the polynomial form of the constraint controls the singular geometry, producing straight pinch lines in homogeneous models and genuinely curved pinch loci in  inhomogeneous ones. We then realize representative cases in lattice spin models and verify their structure factors by Monte Carlo simulations. Finally, we show that the observable singular locus can remain one-dimensional even when the leading local constraint changes its differential order.

\prlsection{Spin liquids from generalized Gauss's laws}
We begin from the minimal continuum theory of a generalized Gauss's law for a two-component electric field $\vb*E$,
\begin{equation}
    \vb* D(\vb* \partial) \cdot \vb* E \equiv D_1E_1+D_2E_2=\rho,
    \label{eq:generalized_gauss_law}
\end{equation}
where $E_{1,2}$ are  electric-field components, $D_{1,2}$ are finite-order local differential operators with 
no zeroth-order term and no common factor, and $\rho$ denotes the charge density. The corresponding quadratic Hamiltonian is diagonal in momentum space and has the matrix form
\begin{equation}
\begin{gathered}
    \mathcal H =
    \int d^3k \, \widetilde E_a^\ast(\vb*{k}) \, \mathcal{T}_{ab}(\vb* k) \, \widetilde E_b(\vb*{k}), \\
   \mathcal{T}_{ab}(\vb* k) = D_a(-i \vb*{k})D_b(i \vb*{k}),
\end{gathered}
\label{eq:rank_one_hamiltonian_momentum_space}
\end{equation}
with summation over \(a,b=1,2\).
$D_a(i\vb*{k})$ is the Gauss-law \textit{symbol}, \textit{i.e.}, the Fourier mode of the differential operator. 
Away from the common zeros of $D_1$ and $D_2$, the matrix $\mathcal{T}_{ab}(\vb* k)$ has rank one.
Its eigenvalues are
\begin{equation}
    \omega_B(\vb*{k})=0,
    \qquad
    \omega_T(\vb*{k})=|D_1(i\vb*{k})|^2+|D_2(i\vb*{k})|^2.
    \label{eq:rank_one_eigenvalues}
\end{equation}
The flat band represents fluctuations within the charge-free ground-state manifold, while the dispersive mode measures the energetic cost of charge-sector fluctuations. This dispersive mode becomes gapless on the common zero set $\mathcal Z=\{\vb*{k}:D_1(i\vb*{k})=D_2(i\vb*{k})=0\}$.
 
For the normalized zero-mode tensor structure, the equal-time correlations are obtained as the projection onto the zero-eigenvalue subspace of the  matrix \cite{YanBentonMoessnerNevidomskyy2024Typology,YanBentonNevidomskyyMoessner2024Detailed},
\begin{equation}
\langle \tilde{E}_a (\vb*{k})\tilde{E}_b (-\vb*{k})
\rangle =\delta_{ab}-\frac{D_a (-i\vb*{k})D_b(i\vb*{k})}{\omega_T(\vb*{k})}.
    \label{eq:flat_band_projector}
\end{equation}
For the hard-spin ensemble used below, this projector fixes the singular tensor structure, up to a smooth scalar form factor.
Thus, near the set \(\mathcal Z\), the projector develops a direction-dependent singularity. An isolated common zero can give the familiar pinch point. 

We study spin liquids in which the set $\mathcal Z$ forms a continuous locus of pinch singularities.
We call such states \textit{pinch-curve spin liquids}.
We demonstrate that these spin liquids can be protected by symmetry and exhibit properties absent in conventional pinch-point spin liquids.

\prlsection{Symmetry-protected pinch curves}
For real differential operators, \(D_a(i\vb*{k})\) generally contains both even-order real terms and odd-order imaginary terms. Thus, the two equations \(D_1(i\vb*{k})=D_2(i\vb*{k})=0\) impose four real constraints in three-dimensional space, and extended common zeros are then nongeneric;  without an additional symmetry, only isolated zeros, such as $\vb* k = \vb* 0$, are stable. Therefore, pinch curves require a symmetry that reduces the condition to two real constraints.   

A concrete way to realize this protection is to combine spatial inversion with an internal orthogonal transformation of the two electric-field components.
Given $\vb* D(i\vb*{k})$,
we consider an inversion symmetry of the form
\begin{equation}
    \mathcal I:\qquad
    \vb*r\to-\vb*r,
    \qquad
    \vb* E(\vb*r)\to U\,\vb* E(-\vb*r),
    \label{eq:inversion_internal_mixing}
\end{equation}
where \(U\) is a real orthogonal matrix acting on two-component electric-field space. Since the real-space differential operators have real coefficients, one has $\vb* D(i\vb*{k})^\ast=\vb* D(-i\vb*{k})$.
Invariance under Eq.~\eqref{eq:inversion_internal_mixing} then imposes the reality condition
\begin{equation}
    \vb* D(i\vb*{k}) = U\,\vb* D(i\vb*{k})^\ast.
    \label{eq:general_inversion_reality}
\end{equation}
This is the essential codimension-reducing condition.  Instead of two unrelated complex equations, the two components of the Gauss-law symbol are restricted to a two-dimensional real subspace. Therefore \(D_1(i\vb*{k})=D_2(i\vb*{k})=0\) gives only two independent real constraints.

With this convention, an order-two inversion symmetry requires \(U^2=\mathds{1}\).
By an orthogonal change of basis in the electric-field space, any such \(U\) can be brought to one of three diagonal forms:
\begin{equation}
    U=+\mathds{1},
    \qquad
    U=-\mathds{1},
    \qquad
    U=\mathrm{diag}(1,-1).
    \label{eq:three_inversion_classes}
\end{equation}
Physically, different choices of \(U\) with the same set of eigenvalues can represent distinct symmetry actions. A diagonal \(U\) means that inversion leaves the two electric-field components separately invariant or odd, while an off-diagonal \(U\) means that inversion exchanges/mixes the two components, as happens naturally when \(E_1,E_2\) live on symmetry-related sublattices.  
\begin{table}[t]
\caption{Representative inversion symmetry constraints on Gauss-law symbols. The first three rows are diagonal normal forms; the exchange row is a physically useful representative of the mixed even/odd class. Here \(P_a\) are real polynomials.}
\label{tab:symmetry_constraints}
\begin{ruledtabular}
\renewcommand{\arraystretch}{1.15}
\begin{tabular}{cc}
 Action \(U\)  &
 Constraint on the symbols  \\
\hline
\(+\mathds{1}\) &
\(D_a(i\vb*{k})=P_a(\vb*{k})\), with \(P_a\) even \\[1mm]
\(-\mathds{1}\) &
\(D_a(i\vb*{k})=iP_a(\vb*{k})\), with \(P_a\) odd \\[1mm]
\(\mathrm{diag}(1,-1)\) &
\(D_1=P_1,\ D_2=iP_2\): one even, one odd \\[1mm]
\(\sigma_x\) &
\(D_1(i\vb*{k})=D_2(i\vb*{k})^\ast\): one complex equation
\end{tabular}
\renewcommand{\arraystretch}{1.0}
\end{ruledtabular}
\end{table}
The first three rows of Table~\ref{tab:symmetry_constraints} are the diagonal cases in Eq.~\eqref{eq:three_inversion_classes}.   
In these cases, each \(D_a(i\vb*{k})\) is reduced, up to a phase \(1\) or \(i\), to one real polynomial \(P_a\) of definite parity.
The fourth row gives a representative exchange symmetry, $E_1 \leftrightarrow E_2$.
In this case, $D_a$ is still complex, but the two equations $D_1=0$ and $D_2=0$ are not independent, and they reduce to one complex equation imposing two real constraints.

In the representative examples below we focus on the diagonal cases in Table~\ref{tab:symmetry_constraints}. Then each symbol can be written as
$
    D_a(i\vb*{k})=\chi_a P_a(\vb*{k}), \ \chi_a\in\{1,i\},
$
where \(P_a\) is a real polynomial of definite parity. Since \(\chi_a\neq0\), the symbol zero condition is equivalent to
$
    P_1(\vb*{k})=P_2(\vb*{k})=0.
$
The zero set is therefore an algebraic curve defined by two real equations in three dimensions.  At a regular real solution, where \(\nabla P_1\) and \(\nabla P_2\) are linearly independent, the implicit function theorem gives a smooth one-dimensional branch. Symmetry-preserving perturbations can deform such a branch, but cannot remove it locally without a singular event. This is the sense in which the pinch curve is protected.  
The spin correlations are governed by the constrained projector in Eq.~\eqref{eq:flat_band_projector}, which becomes directionally singular along this common-zero locus.

Each point on a pinch curve also carries a local emergent Gauss's law. For \(\vb*{k}=\vb*{k}_0+\vb*q\), the leading terms of $\vb*q$ in the Taylor expansion of \(D_a(i\vb*{k}_0+i\vb*q)\) define the infrared Gauss's law near \(\vb*{k}_0\)~\cite{YanBentonMoessnerNevidomskyy2024Typology,YanBentonNevidomskyyMoessner2024Detailed}. This local Gauss's law can vary along the curve, and its differential order can change at some points. 
In this regard, the pinch-curve spin liquid, although described by a global Gauss's law, can be viewed as a continuous family of infrared spin liquids along the curve. 
This is a sharp distinction from canonical pinch-point spin liquids. We give an explicit example below.

\prlsection{Algebraic design of elementary pinch curves}
For the diagonal symmetry cases used below, the common-zero condition reduces to real parity-fixed polynomials \(P_a\). The symmetry argument then becomes an algebraic design and classification principle: $\mathcal Z=\{\vb*{k}\in\mathbb R^3:P_1(\vb*{k})=P_2(\vb*{k})=0\}$.
     
The polynomial form of the generalized Gauss's law fixes the geometry of the pinch locus. If both constraints are homogeneous, every nonzero solution generates the full line \(\lambda\vb*{k}\); homogeneous Gauss's laws therefore produce straight pinch lines in the continuum theory.
Genuinely curved pinch loci in the continuum require inhomogeneous constraints, \textit{i.e.}, terms of different orders within the same $P_a$.

\begin{figure}[t]
    \centering
    \includegraphics[width=\linewidth]{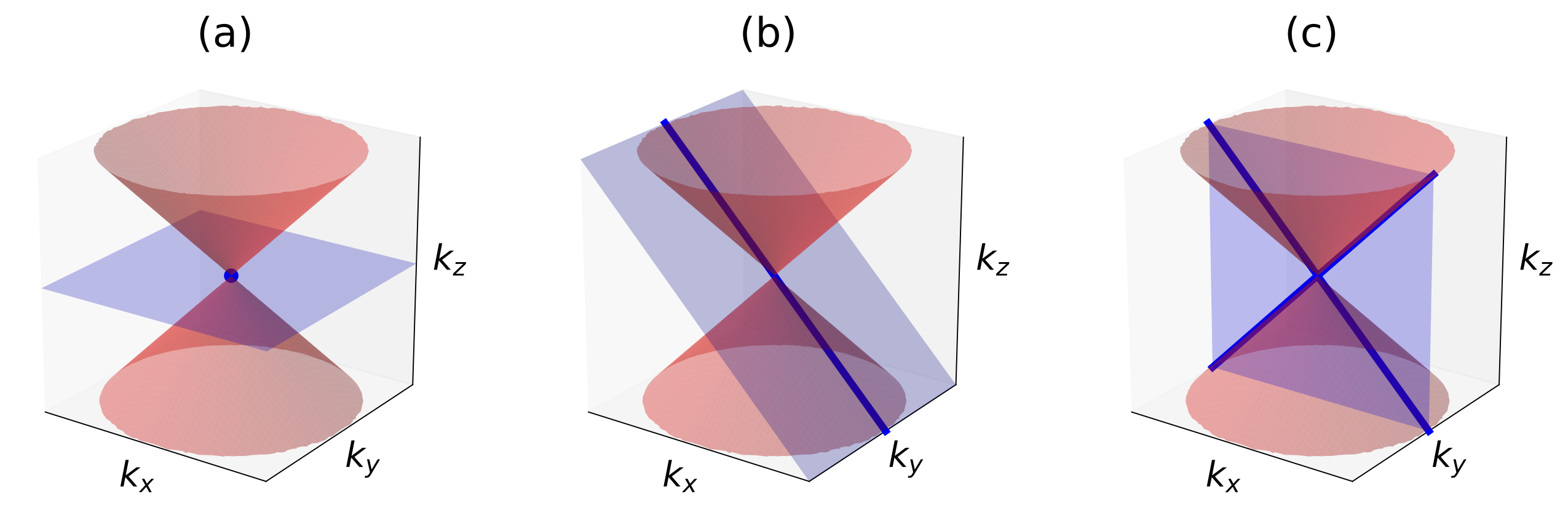}
    \caption[Elementary linear-quadratic intersections.]{Linear-quadratic intersections. A plane \(P_1(\vb*{k})=0\) intersects a quadratic cone \(P_2(\vb*{k})=0\). Dark blue denotes the resulting straight pinch lines. The three cases are (a) \(\eta<0\), origin only; (b) \(\eta=0\), one tangent line; and (c) \(\eta>0\), two distinct lines.}
    \label{fig:cone_plane}
\end{figure}

\begin{figure}[t!]
\centering
\includegraphics[width=\linewidth]{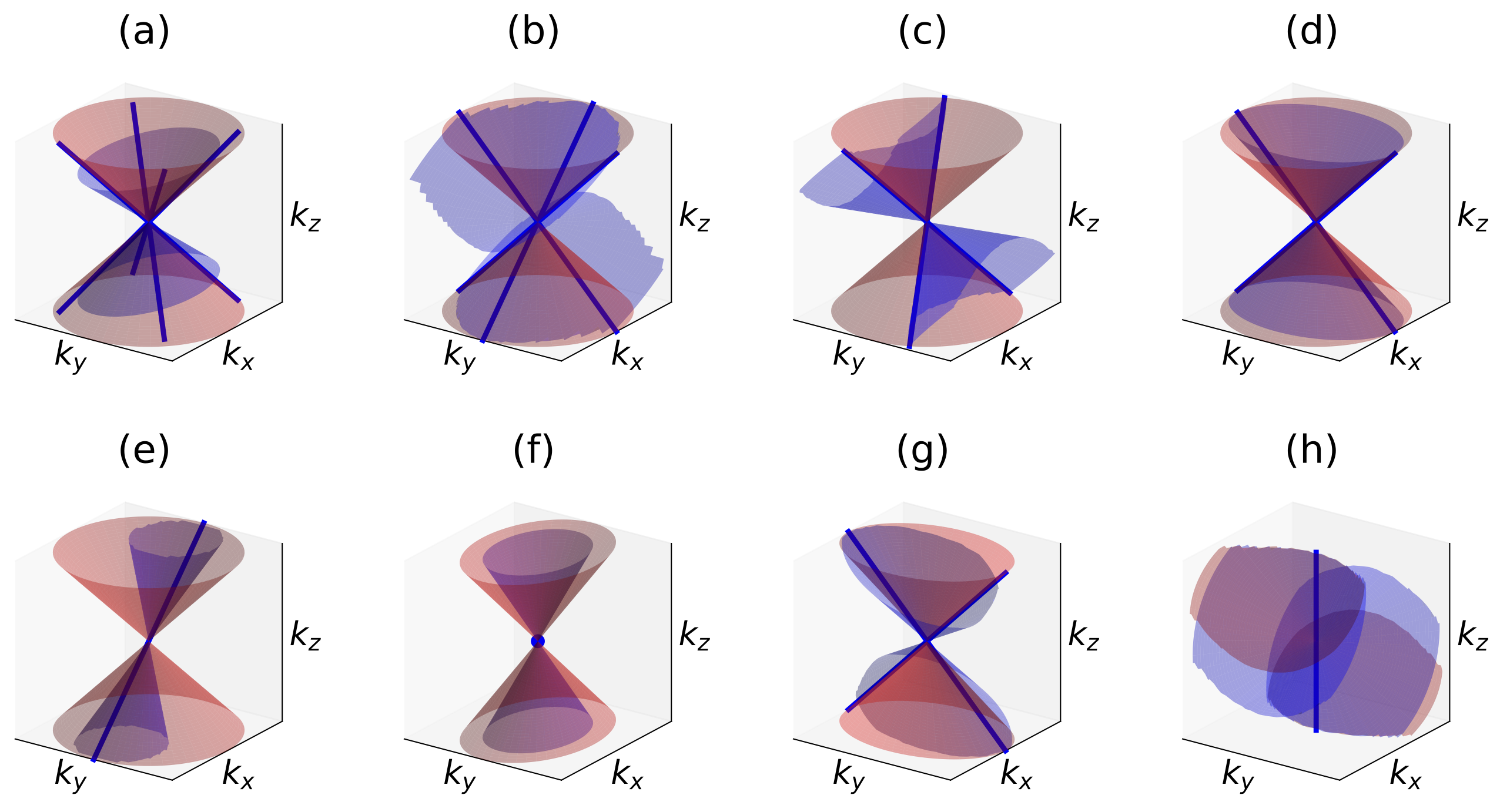} 
\caption[Representative intersection types for two quadratic cones.]{Quadratic-quadratic intersections. Two quadratic cones \(P_1(\vb*{k})=0\) and \(P_2(\vb*{k})=0\) intersect along straight pinch lines, shown in dark blue.}
\label{fig:two_cones}
\end{figure}

\emph{Homogeneous examples.}---
The simplest nontrivial homogeneous mechanism is the linear-linear case: each \(P_a\) contains only linear terms and defines a plane, and the pinch locus is their straight-line intersection. 
The second simplest mechanism is the linear-quadratic case, \(P_1(\vb*{k})=\vb*n^{\mathsf T}\vb*{k}=0\) and \(P_2(\vb*{k})=\vb*{k}^{\mathsf T}\vb*A\vb*{k}=0\).   For a nondegenerate indefinite cone, choosing the sign of \(\vb*A\) so that it has two positive eigenvalues and one negative eigenvalue, the resulting intersection is controlled by
\(\eta=\vb*n^{\mathsf T}\vb*A^{-1}\vb*n\): \(\eta<0\), \(\eta=0\), and \(\eta>0\) correspond, respectively, to the origin only, one tangent line, and two distinct lines. 

The quadratic-quadratic mechanism gives more complex straight-line geometries.  
For \(P_1(\vb*{k})=\vb*{k}^{\mathsf T}\vb*A\vb*{k}\) and \(P_2(\vb*{k})=\vb*{k}^{\mathsf T}\vb*B\vb*{k}\), the intersection of two cones is organized by the \textit{pencil} \(\vb*M(\lambda)=\vb*A-\lambda\vb*B\). 
Simple roots of \(\det\vb* M(\lambda)=0\) reduce the problem to linear-quadratic tests of the type above; the remaining triple-root cases, together with a classification recipe for the real set-theoretic pinch-locus support, are presented in the Supplemental Material (SM).  Representative homogeneous pinch-line geometries are shown in Figs.~\ref{fig:cone_plane} and \ref{fig:two_cones}.

\begin{figure}[t]
    \centering
    \includegraphics[width=\linewidth]{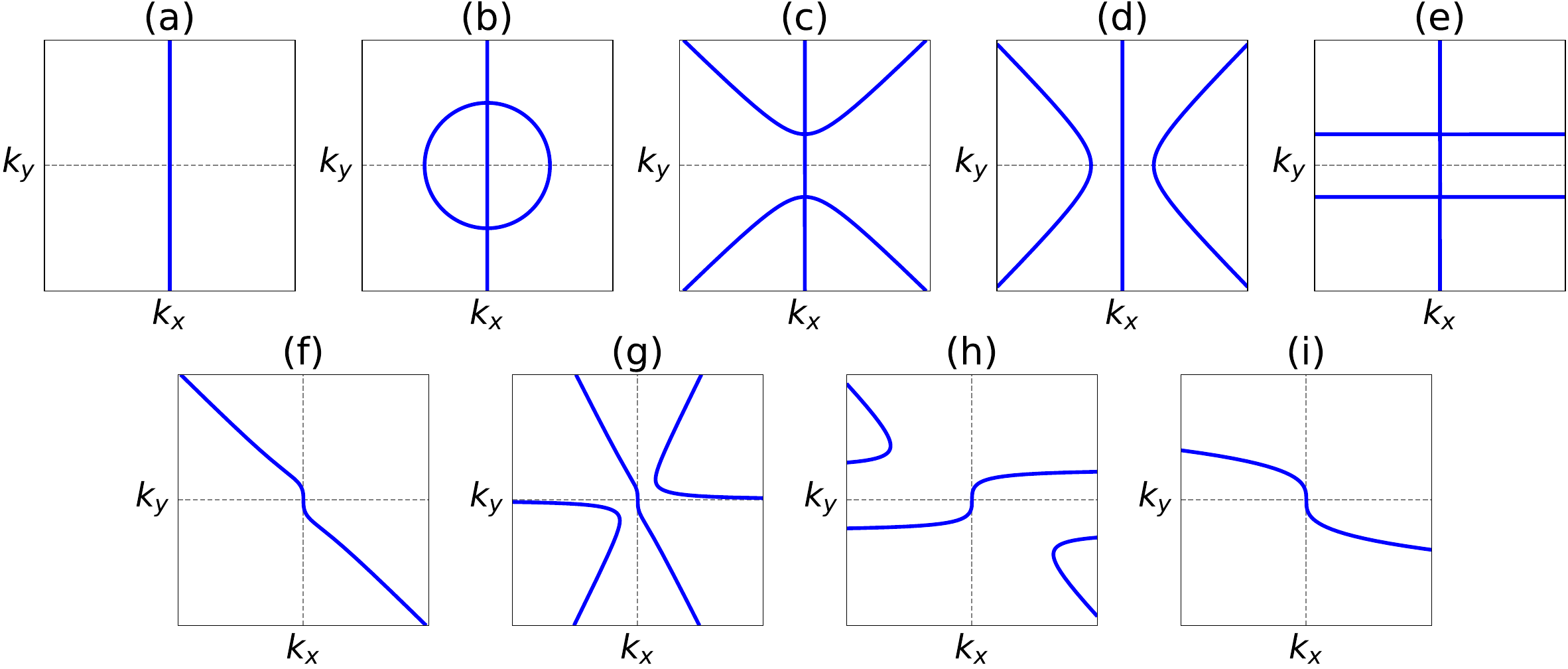}
    \caption { Representative inversion-symmetric planar cubic curves arising from linear-(linear-plus-cubic) intersections.  Reducible cases in panels (a)-(e) contain straight pinch lines, while irreducible cases in panels (f)-(i) give genuine curved pinch loci.}
    \label{fig:cubic_curves}
\end{figure}

\emph{Minimal curved example.}---
Curvature first appears when scale invariance is broken. The minimal odd-parity case consists of a linear constraint and a linear-plus-cubic constraint. After an invertible linear transformation of coordinates, one may set the linear constraint to \(P_1({\vb*k}) =k_z=0\).
Restricting the second constraint to this plane gives a planar cubic of the form $P_2(k_x,k_y) = c_3(k_x,k_y)+\ell_1(k_x,k_y)=0$, where $c_3$ is a homogeneous cubic polynomial and $\ell_1$ is a homogeneous linear polynomial.
A nonzero linear term makes the origin a regular point, so the curve passes smoothly through \(\vb*{k}=0\).
Choosing in-plane coordinates such that $\ell_1=k_x$, one obtains the normal form \begin{equation} P_2(k_x,k_y) = k_x\left(1+a k_x^2+b k_xk_y+c k_y^2\right) +d k_y^3 . \label{eq:linear_cubic_normal_form_main} \end{equation} This form separates the reducible and irreducible cases: $d=0$ gives a reducible curve containing the straight-line component $k_x=0$, while $d\neq0$ gives an irreducible curve with no finite singular point. In the irreducible case, every nonzero point has $k_x\neq0$, and the large-$|\vb*{k}|$ branch structure is governed by $f(t)=c_3(1,t)=a+bt+ct^2+dt^3$,  $t=k_y/k_x$. The real roots of $f(t)$ give the possible asymptotic directions, while their multiplicities determine whether the corresponding branch gives an ordinary asymptote, a pair of asymptotes, or only an asymptotic direction. Thus irreducible constraints provide the minimal genuinely curved pinch loci, whereas reducible constraints contain straight-line pieces.
The classification of the real set-theoretic pinch-locus support, under the stated non-degeneracy assumptions, is given in the SM. Figure~\ref{fig:cubic_curves} shows representative curve types.

These elementary examples capture the main design principle. Homogeneous Gauss-law symbols impose scale invariance and therefore produce straight pinch lines. Inhomogeneous symbols break this scale invariance without violating the symmetry protection, allowing the singular locus to bend into a symmetry-protected curve.  The Taylor expansions of the two constraint polynomials determine the local form of the curve near \(\vb*{k}_0\), the corresponding infrared Gauss's law, and the anisotropic scaling of the associated structure factor.
These results establish a close connection between algebraic geometry and the physics of pinch curves, complementing broader classification and topological approaches to classical spin liquids~\cite{YanBentonNevidomskyyMoessner2024Detailed,YanBentonMoessnerNevidomskyy2024Typology,FangCanoNevidomskyyYan2024,RueeggMorrisYanSlager2025Euler,LozanoGomezBentonGingrasYan2025Atlas}.   

\prlsection{Microscopic lattice realizations and scattering signatures}
We now realize these algebraic mechanisms in explicit lattice spin models and show that the continuum predictions are reproduced near the expansion point. We place two spins in each unit cell of a cubic lattice and identify \(E_{a,\vb*r}=S^z_{a,\vb*r}\) (\(a=1,2\)).  The Hamiltonian takes the local constraint form
\begin{equation}
    H = J\sum_{\vb*r}Q_{\vb*r}^2,
    \qquad
    Q_{\vb*r}=\Delta_1S^z_{1,\vb*r}+\Delta_2S^z_{2,\vb*r},
    \label{eq:lattice_rank_one_hamiltonian}
\end{equation}
where \(\Delta_a\) are discrete lattice difference operators chosen to reproduce the desired Gauss-law symbols $D_a$ in the continuum limit. We denote their Fourier symbols by \(\Delta_a(i\vb*{k})\), which play the same role as the continuum symbols \(D_a(i\vb*{k})\).

We set \(J=1\) and use the shorthand \(s_\mu\equiv\sin k_\mu\), with $\mu=x,y,z$. The three representative choices are 
\begin{align}
\text{I:}\quad & \Delta_1=-4s_x^2+4s_y^2-4s_z^2,\quad   \Delta_2=2is_z, \label{eqn.model1}\\[1mm]
\text{II:}\quad & \Delta_1=s_x^2+2s_y^2-s_z^2,\quad   \Delta_2=2s_x^2+s_y^2-s_z^2, \label{eqn.model2} \\[1mm]
\text{III:}\quad & \Delta_1=8i(s_y-s_x^3),\quad   \Delta_2=2is_z . \label{eqn.model3}
\end{align}
Their zero sets realize, respectively, the three cases discussed above: a plane-cone intersection, an intersection of two quadratic cones, and the curved locus. The corresponding real-space Hamiltonians follow directly by inverse Fourier transforming the finite-difference symbols.

\begin{figure}[t]
\centering
\subfloat[]{\includegraphics[height=0.33\linewidth]{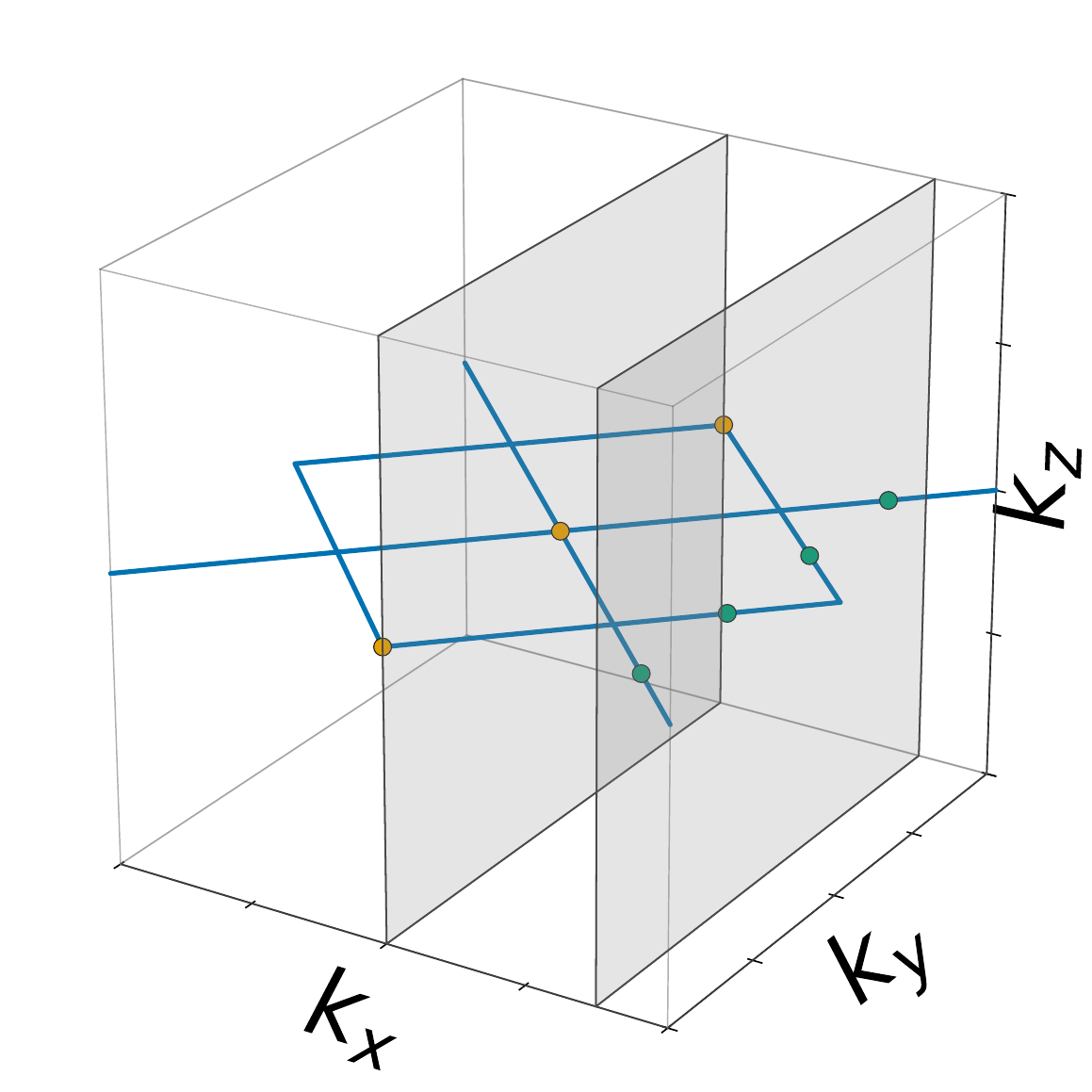}} 
\subfloat[]{\includegraphics[height=0.33\linewidth]{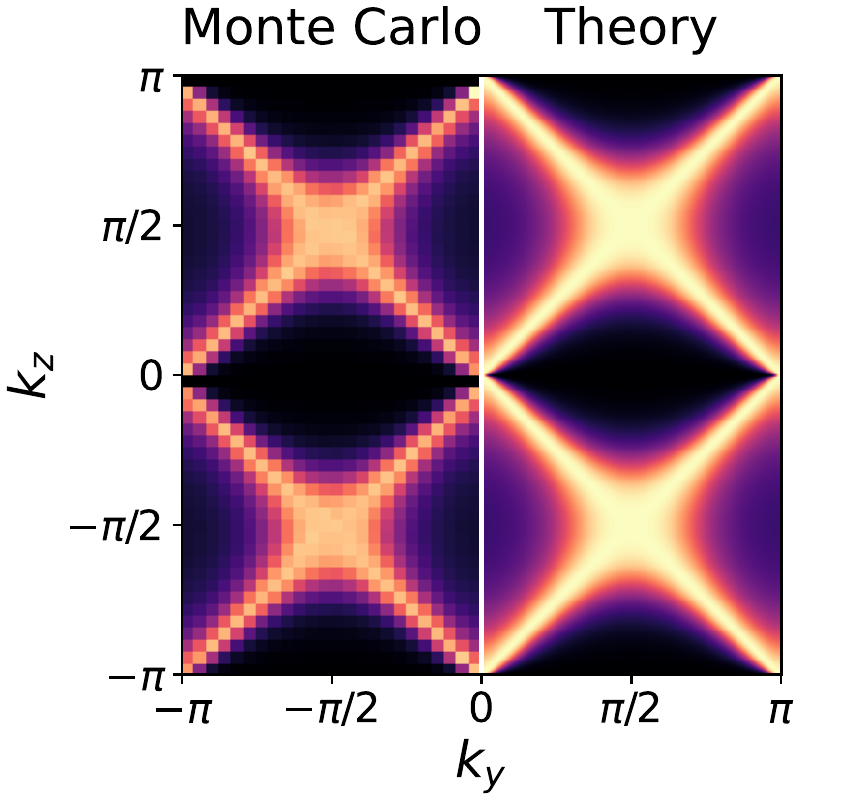}}
\subfloat[]{\includegraphics[height=0.33\linewidth]{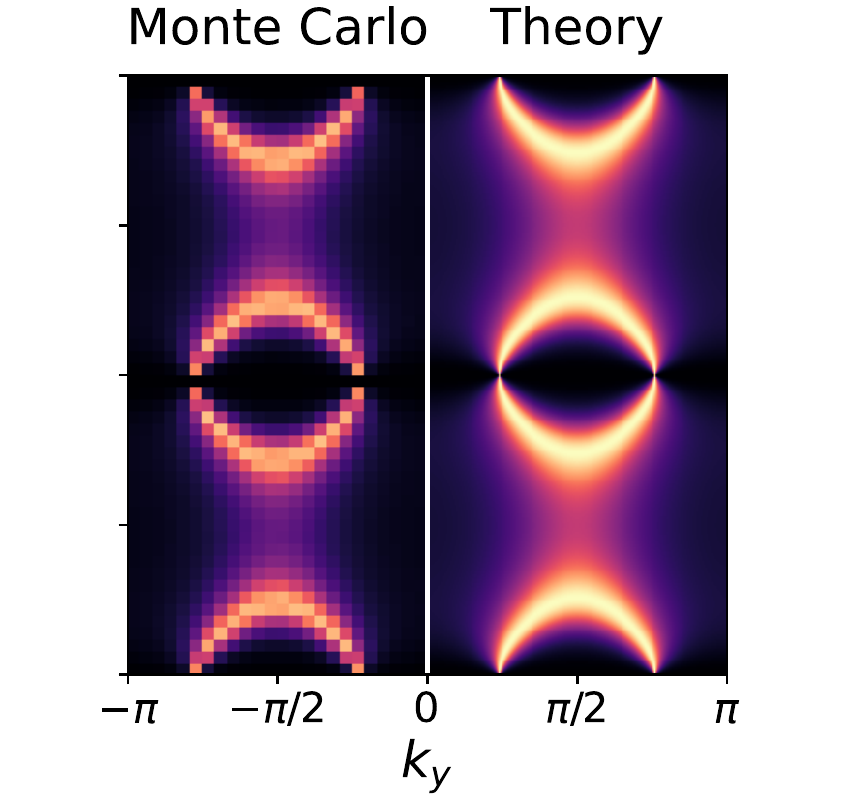}}\\
\subfloat[]{\includegraphics[height=0.33\linewidth]{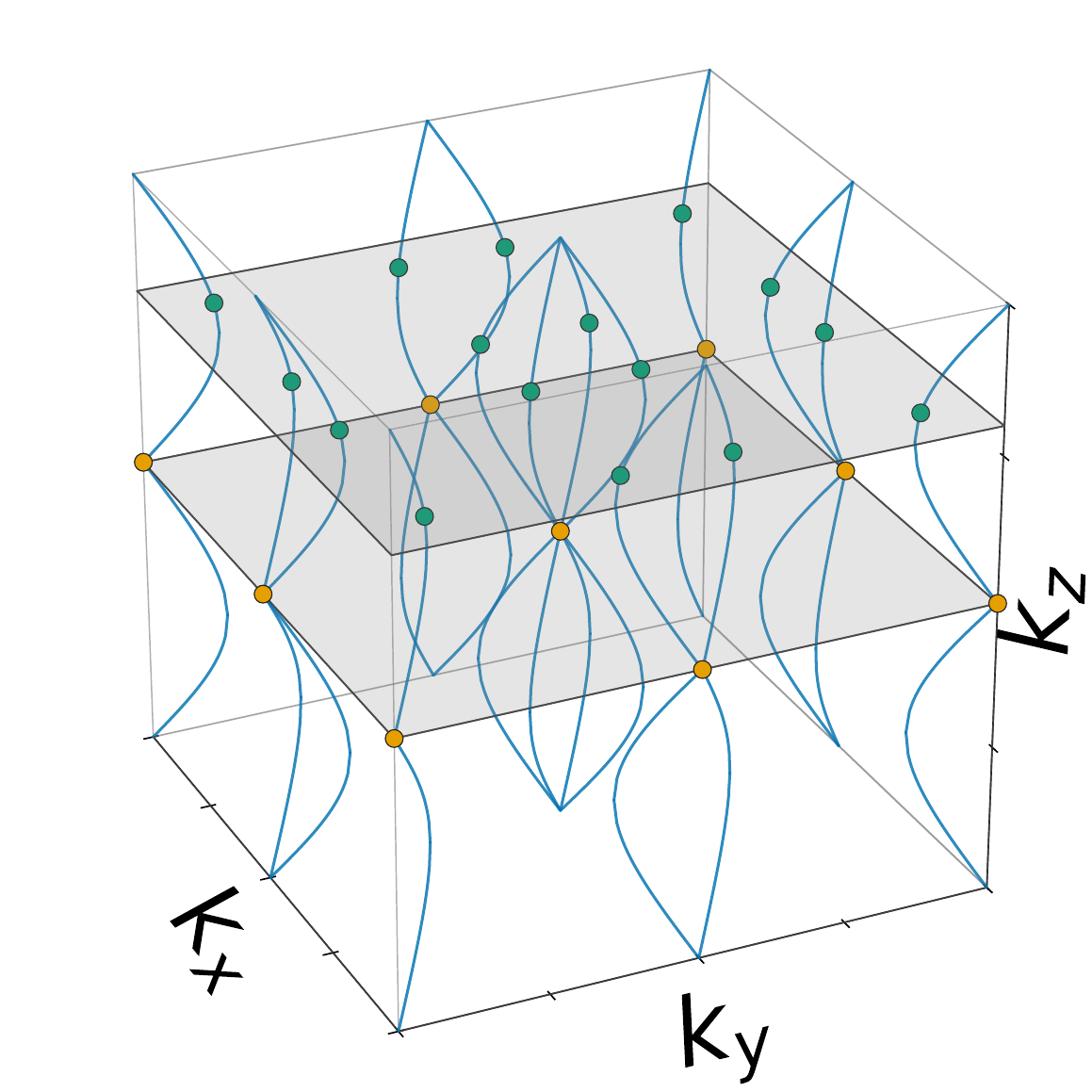}} 
\subfloat[]{\includegraphics[height=0.33\linewidth]{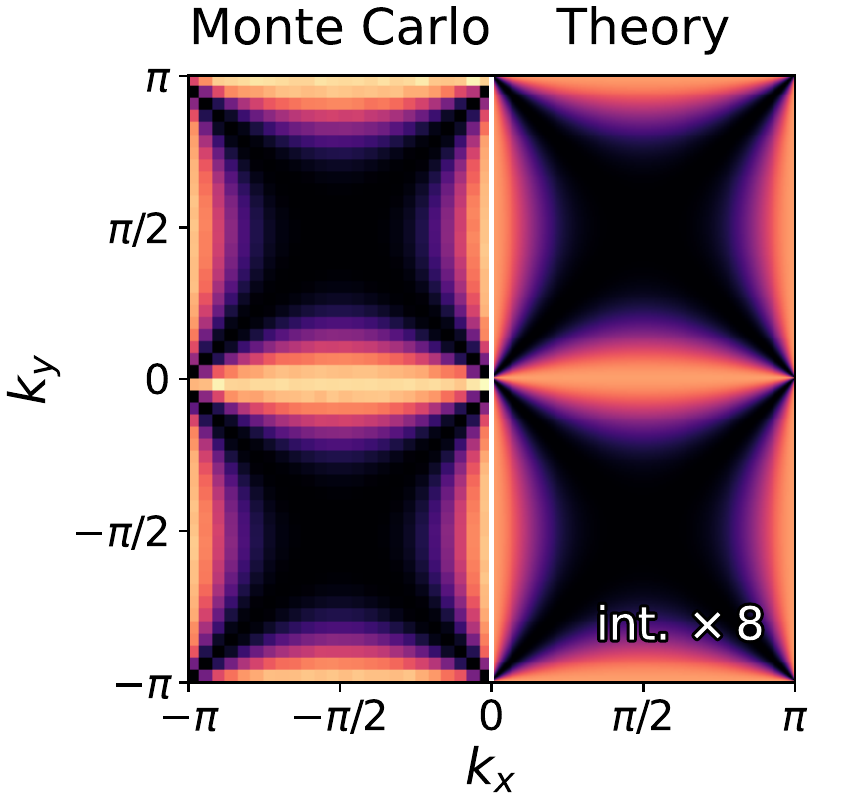}} 
\subfloat[]{\includegraphics[height=0.33\linewidth]{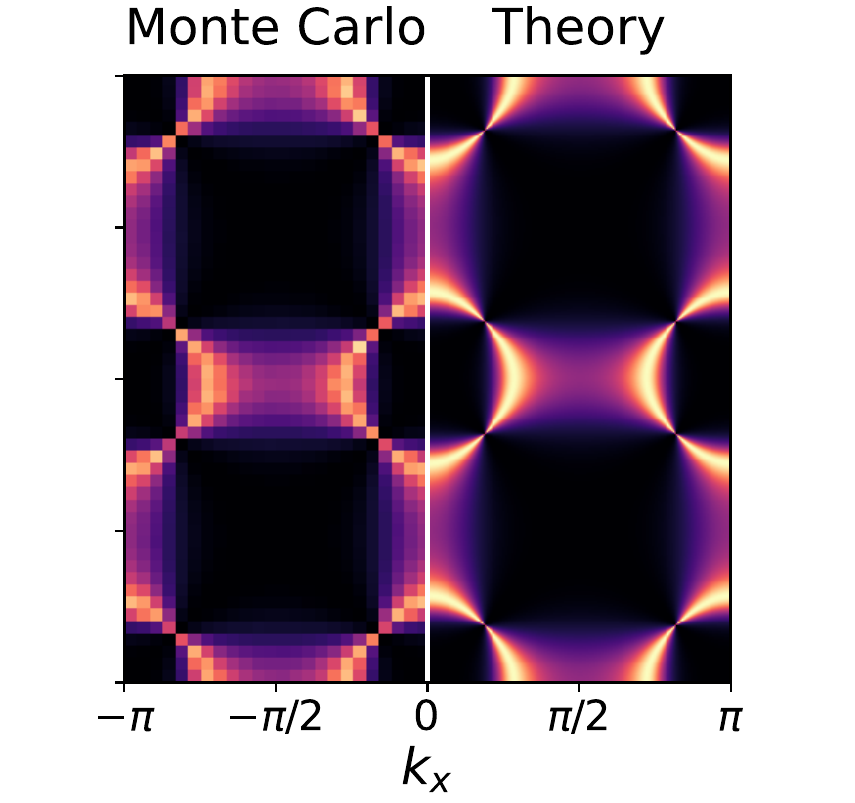}}\\
\subfloat[]{\includegraphics[height=0.33\linewidth]{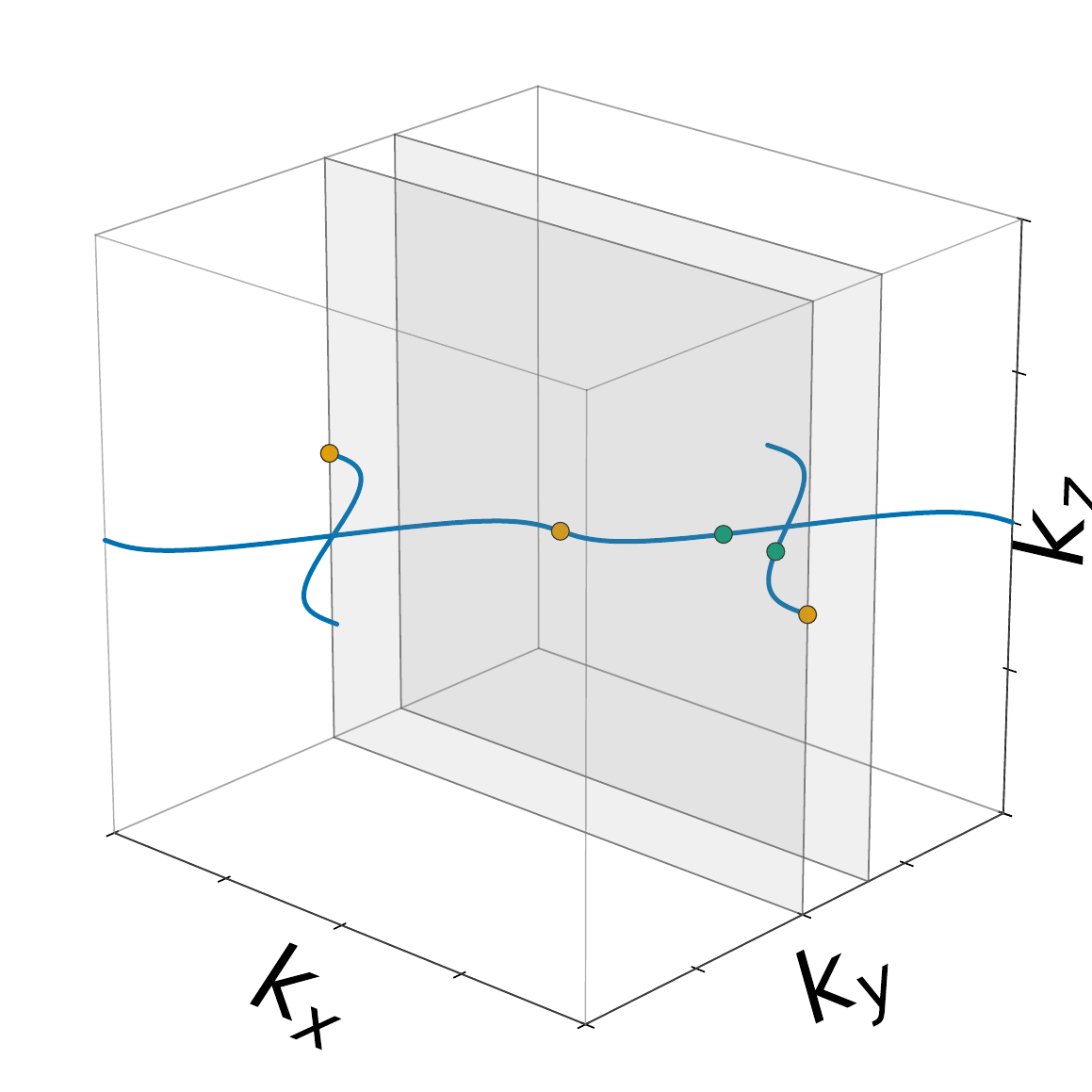}} 
\subfloat[]{\includegraphics[height=0.33\linewidth]{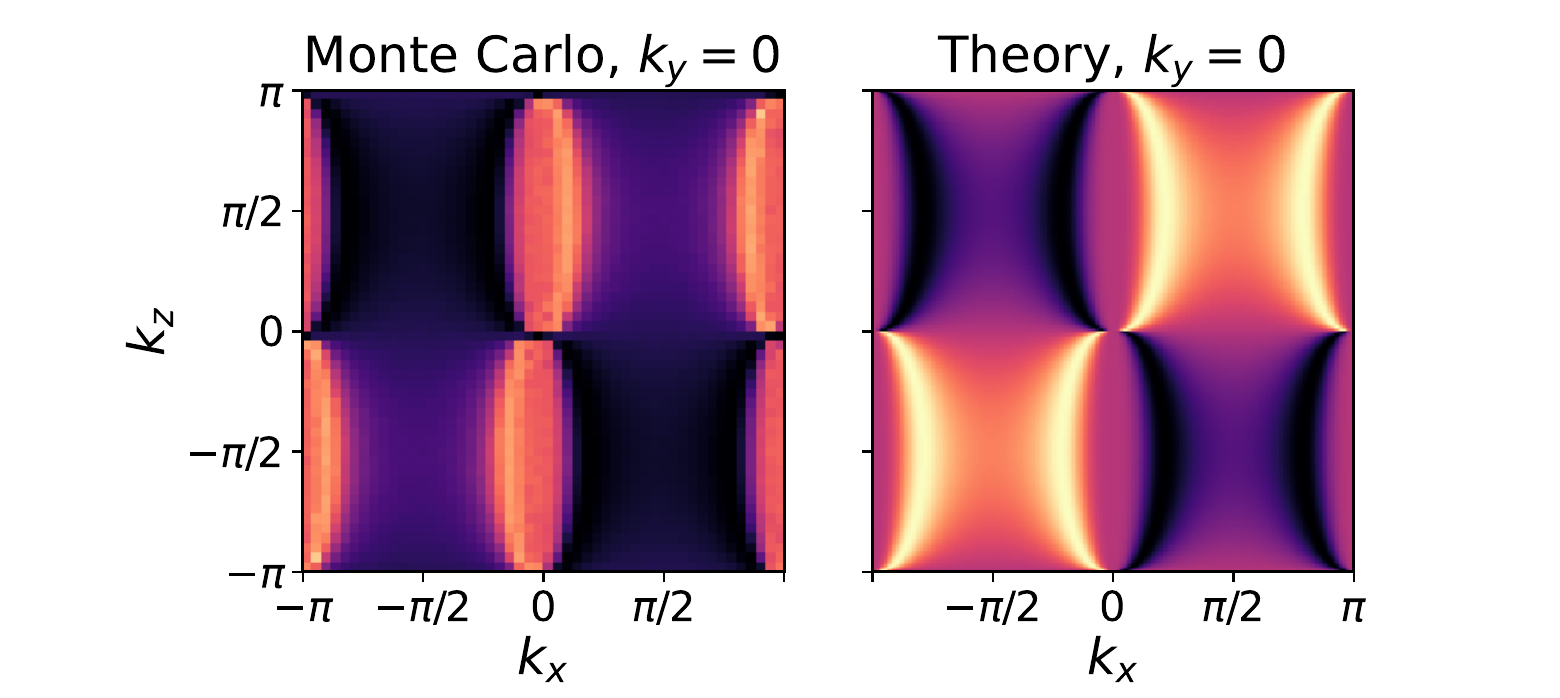}} \\
\subfloat[]{\includegraphics[height=0.33\linewidth]{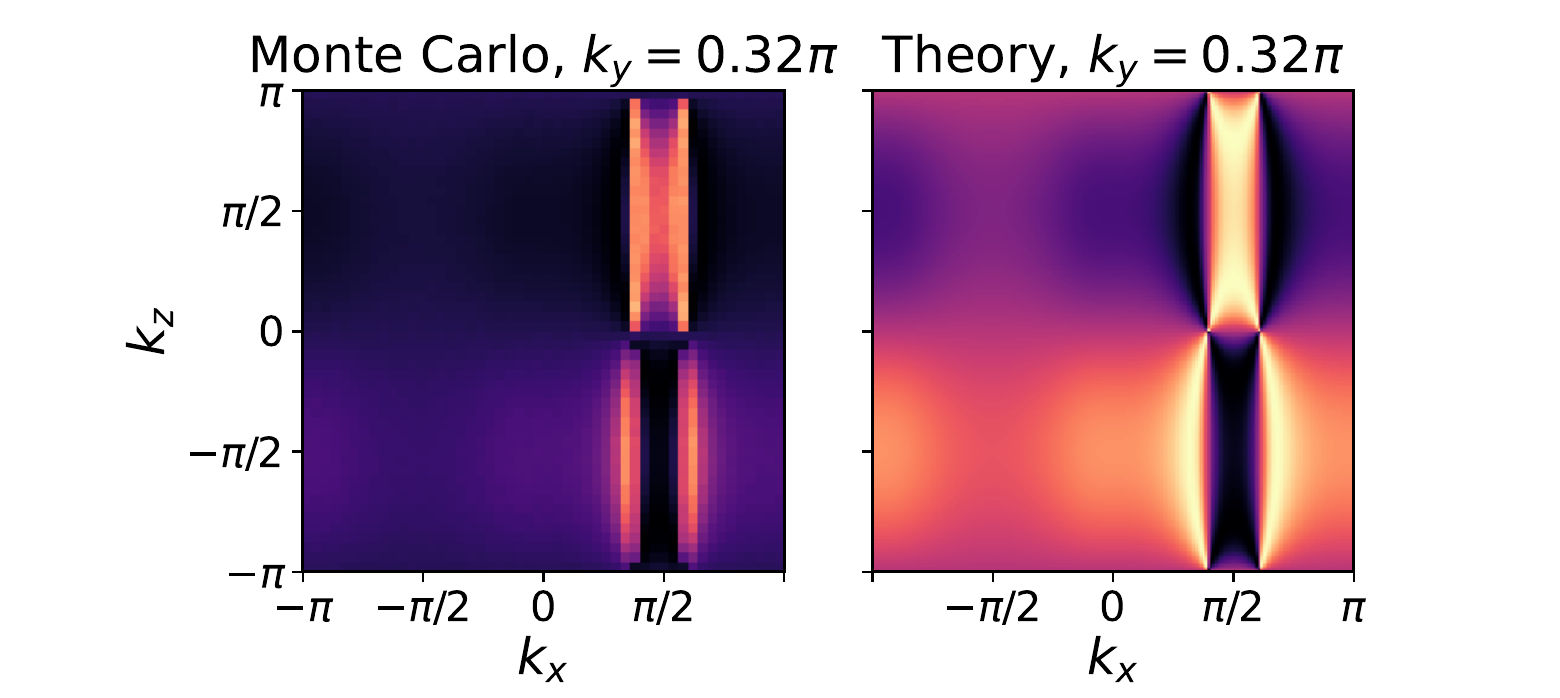}} 
\subfloat{\includegraphics[height=0.33\linewidth]{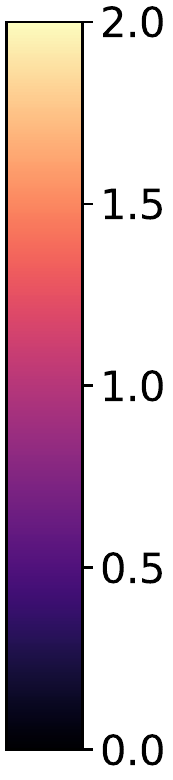}} 
\caption{Structure factors for the three lattice models. (a--c) plane-cone model of Eq.~\eqref{eqn.model1}. 
Panel (a) shows the pinch curve in \(C_{11}(\vb*{k})\) as blue lines in reciprocal space. The gray cross-sections are the planes for the structure factor in panels (b,c). The dots mark the locations of the pinch curve in the respective cross-sections. (d--f) Double-cone model of Eq.~\eqref{eqn.model2}, with $C_{\Sigma}(\vb*{k})$ plotted. (g--i) Inhomogeneous pinch-curve model of Eq.~\eqref{eqn.model3}, with $C_{\Sigma}(\vb*{k})$ plotted.}
\label{fig:three.models}
\end{figure}
 
We measure either the spin-correlation channel \(C_{ab}(\vb*{k})=\langle S^z_a(\vb*{k})S^z_b(-\vb*{k})\rangle\) or the symmetric channel \(C_{\Sigma}(\vb*{k})=\langle S^z_\Sigma(\vb*{k})S^z_\Sigma(-\vb*{k})\rangle\), where \(S^z_\Sigma=S^z_1+S^z_2\).
Monte Carlo simulations are performed on \(L=50\) periodic lattices using a worm update that samples the constrained low-energy manifold.
As shown in Fig.~\ref{fig:three.models}(a--i), the simulated singular loci and structure-factor patterns agree with the zero-mode prediction for all three models.  In Model I  [Fig.~\ref{fig:three.models}(a--c)], different momentum cuts intersect the plane-cone zero set at different points, producing the moving pinch pattern of a straight pinch line. 
In Model II [Fig.~\ref{fig:three.models}(d--f)], the quadratic-quadratic intersection produces four symmetry-related branches. The continuum straight lines become curved in the Brillouin zone because \(k_\mu\) is replaced by \(\sin k_\mu\).

In Model III [Fig.~\ref{fig:three.models}(g--i)], the singular feature follows the curved lattice zero set \(s_y=s_x^3,\ s_z=0\). This demonstrates that an inhomogeneous constraint in the Gauss-law symbol is directly visible in the scattering intensity. The agreement confirms that the universal pinch singularity is controlled by the algebraic zero set of the Gauss-law symbols.
In particular, the inhomogeneous model illustrates the distinctive local structure. Around $\vb* k = \vb* 0$, \(\Delta_1\simeq8i(k_y-k_x^3)\) and \(\Delta_2\simeq2ik_z\), giving
\begin{equation}
    C_\Sigma(\vb*{k})
    \simeq
    \frac{\left[k_z-4(k_y-k_x^3)\right]^2}{k_z^2+16(k_y-k_x^3)^2}.
    \label{eq:non_scale_invariant_pinch}
\end{equation}
Under ordinary isotropic scaling about \(\vb* k=\vb*0\), the cubic term in \(k_y-k_x^3\) is subleading to the linear term \(k_y\). Thus the leading infrared projector at the origin is the regular first-order form controlled by the transverse pair \((k_y,k_z)\). Nevertheless, the cubic term controls the bending of the pinch locus and becomes visible in tangent-sensitive cuts or under the anisotropic scaling \(k_x\sim\lambda\), \(k_y\sim k_z\sim\lambda^3\). Away from points where the gradients of the two lattice constraints become dependent, the pinch curve has the ordinary regular pinch-line form. The tuned model below gives a sharper example in which the leading local Gauss's law itself changes differential order.

\prlsection{Gauss-law transition on a pinch curve}
Pinch curves also host a new type of Gauss-law transition. We demonstrate this with a concrete model in which the singular locus remains one-dimensional but the leading Gauss's law changes its differential order.  This differs from more familiar scenarios in which a transition is driven by the creation, annihilation, splitting, or merging of pinch points. Here, the change is tied instead to a singular change in the local geometry of a pinch curve.

Consider the following one-parameter model family:
\begin{equation}
    \Delta_1(\vb*{k};\delta) = i(\sin^3k_y+\delta\sin k_x),
    \quad
    \Delta_2(\vb*{k}) = i \sin k_z.
    \label{eq:gauss_law_transition_lattice_family}
\end{equation}
Near the origin, the zero set remains one-dimensional for both \(\delta\neq0\) and \(\delta=0\). However, the leading local constraint changes.  Expanding around \(\vb*{k}=\vb0\), one obtains
\begin{equation}
    \Delta_1(\vb*q;\delta) = i(q_y^3+\delta q_x+\cdots),
    \qquad
    \Delta_2(\vb*q) = i q_z+\cdots.
    \label{eq:transition_local_symbols}
\end{equation}
For any fixed \(\delta\neq0\), the linear term \(\delta q_x\) dominates the first constraint at the longest wavelengths.  The leading Gauss's law is therefore equivalent to a first-order conventional one involving \(q_x\) and \(q_z\).  At the tuned point \(\delta=0\), this linear term vanishes, and the leading constraint instead involves \(q_y^3\) and \(q_z\).  In this sense the local infrared constraint changes as
\begin{equation}
    (q_x,q_z)
    \quad\longrightarrow\quad
    (q_y^3,q_z) .
    \label{eq:transition_leading_data_change}
\end{equation}
The visible real locus does not undergo a corresponding dimensional change: for \(\delta\neq0\) it is locally $q_z=0$, $q_x=-\delta^{-1}q_y^3+\cdots$,
whereas at \(\delta=0\) it becomes $q_y=0$, $q_z=0$. 
Both are one-dimensional, but their leading transverse constraints are different.
Figure~\ref{fig:gauss_law_transition} visualizes this transition in both the singular curve [panels (a--c)] and the corresponding structure factor [panels (d--f)].

\begin{figure}[t]
    \centering 
\subfloat[]{\includegraphics[width=0.33\linewidth]{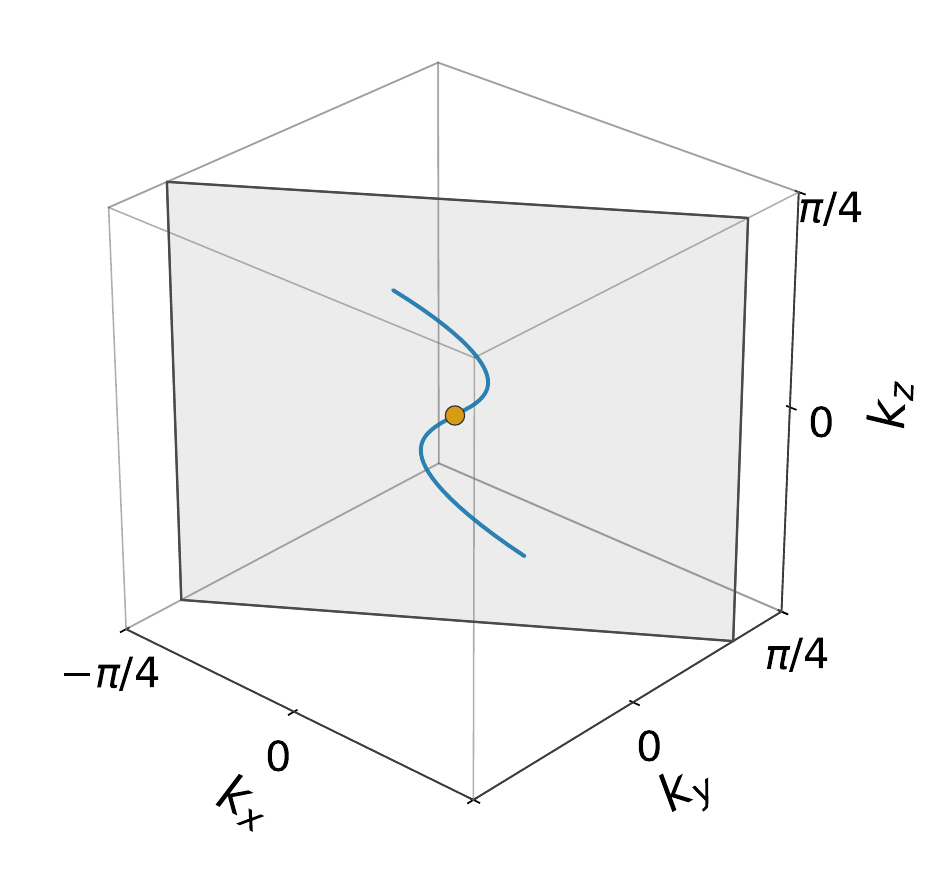}} 
\subfloat[]{\includegraphics[width=0.33\linewidth]{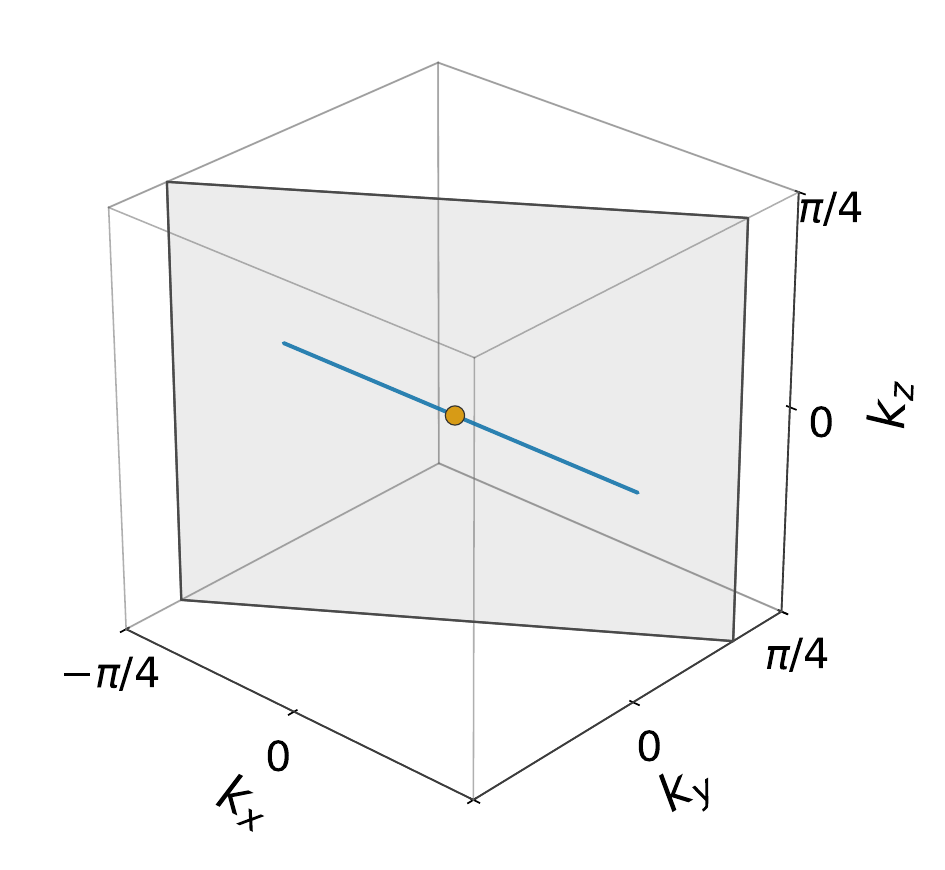}}
\subfloat[]{\includegraphics[width=0.33\linewidth]{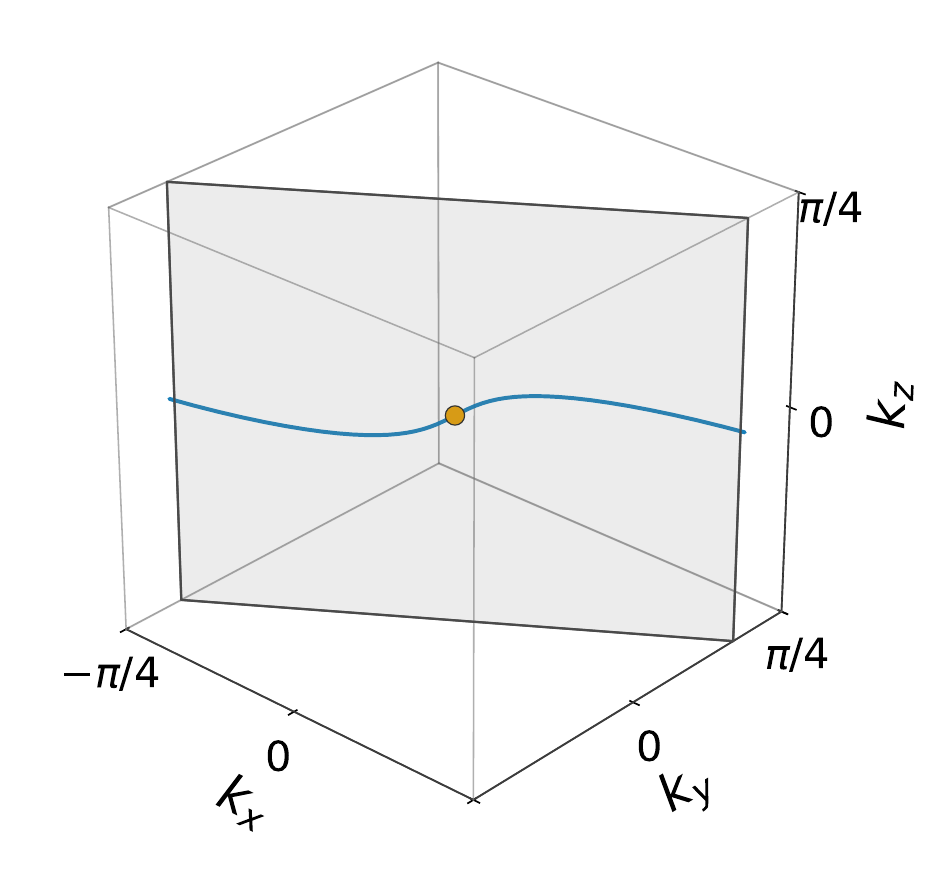}}\\
\subfloat[]{\includegraphics[width=0.33\linewidth]{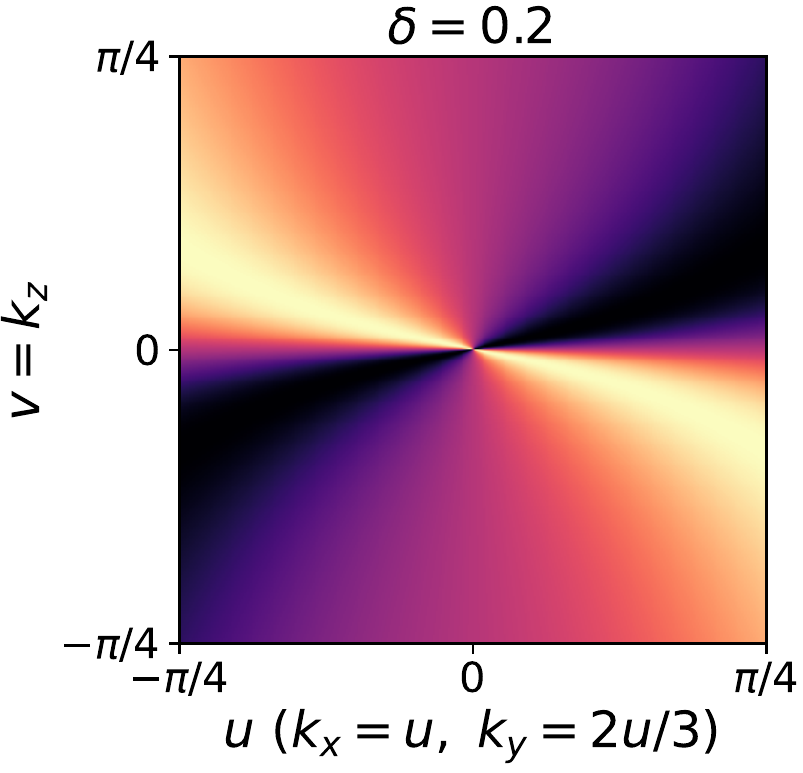}} 
\subfloat[]{\includegraphics[width=0.33\linewidth]{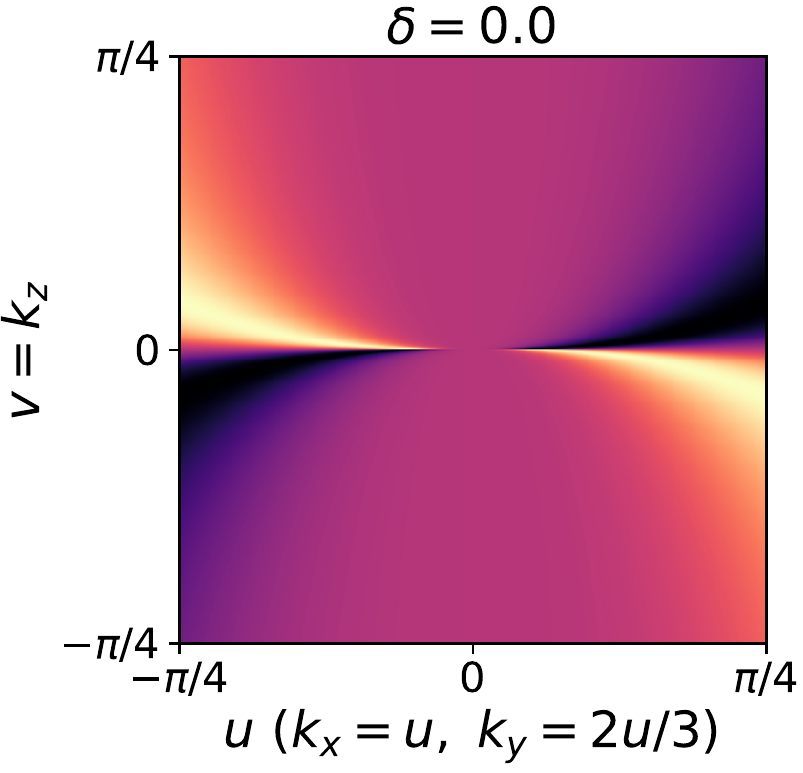}} 
\subfloat[]{\includegraphics[width=0.33\linewidth]{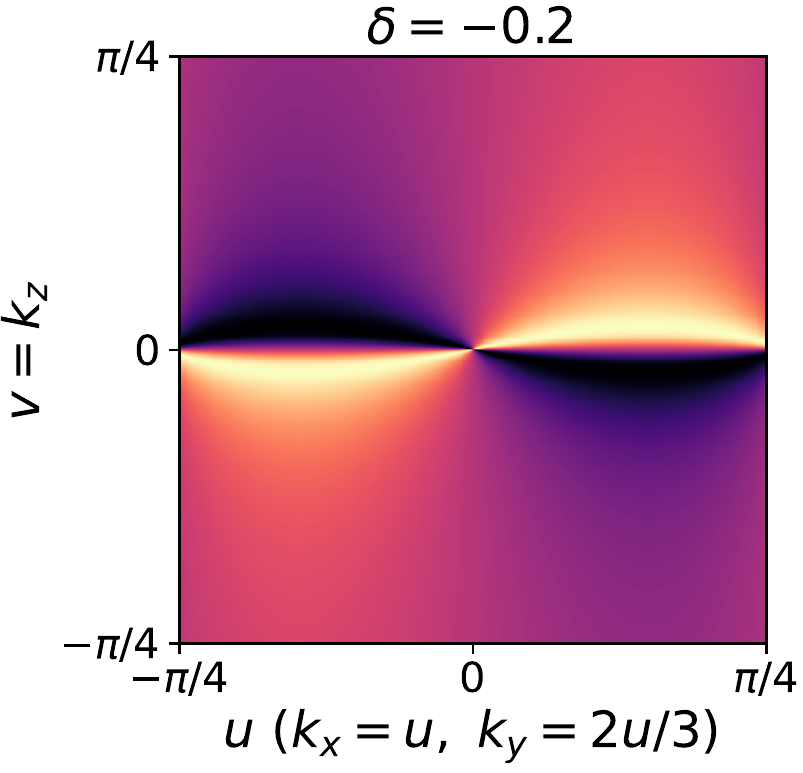}} 
\caption{Gauss-law transition for \(\Delta_1=i(\sin^3k_y+\delta\sin k_x)\), \(\Delta_2= i \sin k_z\). Panels (a,b,c) show the singular curve for \(\delta=0.2,0,-0.2\) and the momentum cuts; panels (d,e,f) show the corresponding structure factors. The leading local constraint changes from first order for \(\delta\neq0\) to third order at \(\delta=0\).}
\label{fig:gauss_law_transition}
\end{figure}

The same change appears directly in the structure factor. The zero-mode prediction for the symmetric channel has the local form
\begin{equation}
    C_\Sigma(\vb*q;\delta)
    \simeq
    \frac{\left[q_z-(q_y^3+
    \delta q_x)\right]^2}{q_z^2+(q_y^3+\delta q_x)^2}.
    \label{eq:transition_structure_factor_general}
\end{equation}
For any fixed nonzero \(\delta\), the leading singularity is controlled by the first-order pair \((\delta q_x,q_z)\). At \(\delta=0\), the structure factor instead exhibits anisotropic scaling \(q_z\sim q_y^3\). Thus the tuning is singular in the infrared: the visible zero set remains a curve, but the leading local Gauss's law changes its differential order.

\prlsection{Discussion}
Two-component scalar-charge Gauss's laws in three dimensions provide a minimal setting for generalized gauge theory for  constrained paramagnets. Nevertheless, they exhibit a broad range of phenomena revealed in this work. 

A richer landscape of classical spin liquids, their quantum counterparts, and their transitions remains to be explored within this family and beyond.
First, a complete classification of global Gauss's laws and their local infrared equivalence classes near pinch loci remains an open problem.
Such classification should clarify when different local constraints are equivalent in the infrared and what physical consequences follow, particularly regarding charge mobility and topological sectors.
Second, the connection between algebraic geometry and spin-liquid correlations deserves further development, building on recent classification frameworks for classical spin liquids~\cite{YanBentonNevidomskyyMoessner2024Detailed,YanBentonMoessnerNevidomskyy2024Typology,FangCanoNevidomskyyYan2024,RueeggMorrisYanSlager2025Euler,LozanoGomezBentonGingrasYan2025Atlas}. 
Algebraic notions such as reducibility, multiplicity, singularity, and global winding on the Brillouin torus may provide useful invariants of constrained spin-liquid structures.  

Finally, it is important to identify experimental platforms in which pinch curve spin liquids can be realized or approximated. In particular, frustrated magnets with competing spin-liquid regimes~\cite{YanBentonJaubertShannon2017PhaseCompetition,TaillefumierBentonYanJaubertShannon2017,LozanoGomezYan2025Fragmented} and cold-atom systems with highly tunable interactions~\cite{SemeghiniLevineKeeslingEtAl2021Science,TianSamajdarGadway2025PRL,HalimehAidelsburgerGrusdtHaukeYang2025NaturePhysics} are promising directions.

\section{Acknowledgments}
\begin{acknowledgments} 
T.F. was supported by RIKEN Junior Research Associate Program and the Sasakawa Scientific Research Grant from the Japan Science Society.
H.Y. acknowledges the Grant-in-Aid for Research Activity Start-up from the Japan Society
for the Promotion of Science (Grant No. 24K22856).
\end{acknowledgments}
 
\bibliography{pinch_curve_ref.bib}

\clearpage
\setcounter{equation}{0}
\setcounter{figure}{0}
\setcounter{table}{0}
\makeatletter
\renewcommand{\theequation}{S\arabic{equation}}
\renewcommand{\thefigure}{S\arabic{figure}}
\renewcommand{\thetable}{S\arabic{table}}
\renewcommand{\bibnumfmt}[1]{[#1]}
\renewcommand{\citenumfont}[1]{#1}

\onecolumngrid
 
\begin{center}
	\Large{\textbf{Supplemental Material for ``Symmetry-Protected Pinch Curves in Classical Spin Liquids''}}
\end{center} 

\section{Proofs of the algebraic classifications}

We prove the classification statements used in the main text. 
The classification below determines the real set-theoretic support of the pinch locus under the stated assumptions. In the homogeneous cases this support is a finite union of real lines, while in the linear-(linear-plus-cubic) case it may include genuine affine curve components. It does not keep track of complex projective data, intersection multiplicities, or scheme-theoretic structure.
We analyze the three cases considered there: linear-quadratic, quadratic-quadratic, and linear-(linear-plus-cubic) cases.
For each case, we show explicitly how the algebraic conditions determine the resulting intersection geometry in three-dimensional space.

Before considering individual cases, we first note a general consequence of homogeneity. 
Suppose that the two constraints are homogeneous polynomials,
\[
    P_1(\vb*{k})=0, \qquad P_2(\vb*{k})=0.
\]
Here $P_1(\vb*{k})$ and $P_2(\vb*{k})$ are homogeneous of degree $n$ and $m$, respectively.
If a nonzero vector $\vb*{k}_0$ satisfies $P_1(\vb*{k}_0)=0$ and $P_2(\vb*{k}_0)=0$, then every rescaled vector $\lambda\vb*{k}_0$ with $\lambda\in\R$ also satisfies 
\[
    P_1(\lambda\vb*{k}_0)=\lambda^n P_1(\vb*{k}_0)=0, 
    \qquad 
    P_2(\lambda\vb*{k}_0)=\lambda^m P_2(\vb*{k}_0)=0.
\]
Therefore, each nonzero solution represents a straight line through the origin in three-dimensional space.
Under the coprimality assumption for the two polynomials, the homogeneous pairs can produce only a finite number of straight lines. 
Genuine curved affine loci require inhomogeneous constraints, such as the linear-(linear-plus-cubic) case.

\subsection{The linear-quadratic case}

We prove the classification statement for the linear-quadratic case, summarized in Table~\ref{tab:LQ_classification}. 
The two equations are
\[
    P_1(\vb*{k})=\vb*{n}^{\mathsf T}\vb*{k}=0, 
    \qquad 
    P_2(\vb*{k})=\vb*{k}^{\mathsf T}\vb*{A}\vb*{k}=0,
\]
where $\vb*{n}$ is the normal vector of the plane $P_1(\vb*{k})=0$, and $\vb*{A}$ is a real, symmetric, nondegenerate, and indefinite $3\times3$ matrix. 
We work in the principal-axis basis of $P_2(\vb*{k})$, where
\[
    \vb*{A}=\mathrm{diag}(a_1,a_2,a_3), 
    \qquad 
    a_1,a_2>0, 
    \qquad 
    a_3<0.
\]
Throughout this subsection, we express the normal vector $\vb*{n}$ in the same principal-axis basis.

\begin{theorem}
The sign of
\[
    \eta \coloneqq \vb*{n}^{\mathsf T} \vb*{A}^{-1} \vb*{n}
    \label{eq:eta_SM}
\]
determines the intersection pattern:
\[
\begin{array}{rcl}
    \eta<0 & \longrightarrow &
    \text{the plane intersects the cone only at the origin}, \\ [1mm]
    \eta=0 & \longrightarrow &
    \text{the plane is tangent to the cone and gives one tangent line}, \\ [1mm]
    \eta>0 & \longrightarrow &
    \text{the plane cuts the cone and gives two distinct lines}.
\end{array}
\]
\end{theorem}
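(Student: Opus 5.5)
The plan is to restrict the quadratic form $Q(\vb*{k})=\vb*{k}^{\mathsf T}\vb*{A}\vb*{k}$ to the plane $\Pi=\{\vb*{n}^{\mathsf T}\vb*{k}=0\}$ (with $\vb*{n}\neq0$) and read off its signature. Since both equations are homogeneous, the intersection $\Pi\cap\{Q=0\}$ is a union of lines through the origin. A binary quadratic form $q$ on the two-dimensional space $\Pi$ has the following zero sets:
\begin{itemize}
\item only $\vb*0$ if $q$ is definite;
\item a single line if $q$ is nonzero but degenerate;
\item two distinct lines if $q$ is indefinite and nondegenerate.
\end{itemize}
The case $q\equiv0$ cannot occur: $\Pi$ would then be a two-dimensional totally isotropic subspace of a nondegenerate form of signature $(2,1)$, while such a form has maximal isotropic dimension $1$. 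So the whole task reduces to showing that the type of $q=Q|_\Pi$ is governed by the sign of $\eta$.

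The key identity I would establish is a bordered-determinant (Schur complement) formula. Pick a basis $\vb*u,\vb*v$ of $\Pi$ and let $G$ be the $2\times2$ Gram matrix of $Q$ on it. I would show that $\det G$ is proportional, with a positive factor, to
\[
-\det\begin{pmatrix}\vb*{A}&\vb*{n}\\ \vb*{n}^{\mathsf T}&0\end{pmatrix}=\det(\vb*{A})\,\vb*{n}^{\mathsf T}\vb*{A}^{-1}\vb*{n}=\det(\vb*{A})\,\eta .
\]
The cleanest route is to use the basis $(\vb*u,\vb*v,\vb*{A}^{-1}\vb*{n})$ when $\eta\neq0$. In this basis $Q$ block-diagonalizes, because $\vb*u^{\mathsf T}\vb*{A}(\vb*{A}^{-1}\vb*{n})=\vb*u^{\mathsf T}\vb*{n}=0$ for $\vb*u\in\Pi$. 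Also, $\vb*{A}^{-1}\vb*{n}\notin\Pi$ exactly because $\eta\neq0$. Hence
\[
\det(\vb*{A})\,(\det S)^2=\det G\cdot\eta ,
\]
where $S$ is the change-of-basis matrix. Since $\det\vb*{A}=a_1a_2a_3<0$, this gives $\operatorname{sign}\det G=-\operatorname{sign}\eta$.

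Sylvester's law of inertia then finishes the nondegenerate cases. The full signature $(2,1)$ splits as the signature of $G$ plus the sign of $Q(\vb*{A}^{-1}\vb*{n})=\eta$.
\begin{itemize}
\item If $\eta<0$, the complementary direction carries the single negative sign. Then $G$ is positive definite, so the only real solution is $\vb*0$.
\item If $\eta>0$, $G$ has signature $(1,1)$ and is nondegenerate. Then $q$ factors into two distinct real linear forms, giving two distinct lines.
\end{itemize}

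The main obstacle is the tangent case $\eta=0$, where $\vb*w=\vb*{A}^{-1}\vb*{n}$ lies in $\Pi$ and the decomposition above fails. Here I would argue directly. Since $\vb*{A}$ is invertible and $\vb*{n}\neq0$, we have $\vb*w\neq0$. Moreover $Q(\vb*w)=\eta=0$, so $\vb*w$ is a null vector in $\Pi$. It also lies in the radical of $q$, because $\vb*x^{\mathsf T}\vb*{A}\vb*w=\vb*x^{\mathsf T}\vb*{n}=0$ for all $\vb*x\in\Pi$. Therefore $q$ is degenerate. It is not identically zero by the isotropy-dimension remark above. A nonzero degenerate binary form is $\pm\ell^2$, so its zero set is exactly the single line $\mathbb R\vb*w$.

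This line is tangent in the geometric sense as well: the normal to the cone at $\vb*w$ is $\nabla Q(\vb*w)=2\vb*{A}\vb*w=2\vb*{n}$, which is also the normal of $\Pi$. This justifies the word ``tangent'' in the statement.
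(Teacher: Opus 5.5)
Your proof is correct, but it takes a different route from the paper's. The paper works with the dual cone $\vb*{k}^{\mathsf T}\vb*{A}^{-1}\vb*{k}=0$. It first shows that a plane $\vb*{n}_0^{\mathsf T}\vb*{k}=0$ is tangent to the cone exactly when $\vb*{n}_0$ lies on the dual cone: the tangent plane at $\vb*{k}_0$ has normal $2\vb*{A}\vb*{k}_0$, and conversely one takes $\vb*{k}_0=\tfrac12\vb*{A}^{-1}\vb*{n}_0$. It then argues by deformation. The number of intersection lines can change only at tangency, so it is constant on the regions $\eta<0$ and $\eta>0$. The two test normals $(0,0,1)$ and $(1,0,0)$ then fix which sign gives which pattern.

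You instead restrict $Q$ to $\Pi$ and use the $\vb*{A}$-orthogonal splitting $\mathbb{R}^3=\Pi\oplus\mathbb{R}\,\vb*{A}^{-1}\vb*{n}$. This gives $\det G\cdot\eta=\det\vb*{A}\,(\det S)^2$, and Sylvester's law of inertia then yields the signature of $q$ directly. When $\eta=0$, the vector $\vb*{A}^{-1}\vb*{n}$ spans the radical of $q$.

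What your route buys is a self-contained argument with no continuity step. The paper leaves two points implicit: why the line count is locally constant away from $\eta=0$, and why each sign region is connected up to $\vb*{n}\to-\vb*{n}$. Your identity $\operatorname{sign}\det G=-\operatorname{sign}\eta$ is exactly the discriminant statement that would justify both. You also get the tangent line explicitly as $\mathbb{R}\,\vb*{A}^{-1}\vb*{n}$, without needing the principal-axis basis.

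The paper's route is more geometric. It interprets $\eta$ as the dual-cone equation evaluated at the plane's normal, which is the picture behind reapplying the same criterion to $\vb*{n}_\pm$ with $\vb*{B}$ in the quadratic-quadratic case.
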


\begin{proof} 
The relevant object is the dual cone of $P_2(\vb*k)=0$, 
\[ 
    P_{2,d}(\vb*{k})=\vb*{k}^{\mathsf T}\vb*{A}^{-1}\vb*{k}=0. 
\] 
The tangent plane to $P_2(\vb*{k})=0$ at a nonzero point $\vb*{k}_0$ has a normal vector 
\[ 
    \vb*{n}_0=2\vb*{A}\vb*{k}_0. 
\] 
Since $\vb*{k}_0^{\mathsf T}\vb*{A}\vb*{k}_0=0$, this normal vector satisfies $\vb*{n}_0^{\mathsf T}\vb*{A}^{-1}\vb*{n}_0=0$, meaning that ${\vb*n}_0$ is on the dual cone $P_{2,d}=0$. 
Conversely, if a nonzero point $\vb*{n}_0$ satisfies $\vb*{n}_0^{\mathsf T}\vb*{A}^{-1}\vb*{n}_0=0$, then $\vb*{k}_0=\frac{1}{2}\vb*{A}^{-1}\vb*{n}_0$ lies on the original cone, and the plane $\vb*{n}_0^{\mathsf T}\vb*{k}=0$ is tangent to the cone. Thus $\eta=0$ is precisely the tangency condition. 

The dual cone separates the possible normal directions into two regions, $\eta<0$ and $\eta>0$. Within each region, the number of real intersection lines cannot change unless the plane becomes tangent to the cone, which occurs only at $\eta=0$. It remains only to identify which sign corresponds to which geometry, and two representative choices suffice. For $\vb*{n}=(0,0,1)$, we have $\eta=1/a_3<0$, and the plane $k_z=0$ intersects $a_1 k_x^2+a_2 k_y^2=0$ only at the origin. For $\vb*{n}=(1,0,0)$, we have $\eta=1/a_1>0$, and the plane $k_x=0$ gives $a_2 k_y^2+a_3 k_z^2=0$, which consists of two distinct real lines since $a_2>0$ and $a_3<0$. Therefore $\eta<0$ gives only the origin, $\eta=0$ gives one tangent line, and $\eta>0$ gives two distinct lines. 
\end{proof}

\renewcommand{\arraystretch}{1.5}
\begin{table}[h]
\centering
\caption{Summary of the algebraic classification for the linear-quadratic case. The sign of $\eta$, defined in Eq.~\eqref{eq:eta_SM}, determines the intersection geometry. Straight pinch lines appear if and only if $\eta\ge0$.}
\label{tab:LQ_classification}
\begin{ruledtabular}
\begin{tabular}{cc}
\textbf{Algebraic condition} & \textbf{Intersection geometry} \\
\hline 
(a) $\eta<0$ & Origin only \\
(b) $\eta=0$ & One tangent line \\
(c) $\eta>0$ & Two distinct lines \\
\end{tabular}
\end{ruledtabular}
\end{table}
\renewcommand{\arraystretch}{1.0}

\subsection{The quadratic-quadratic case}

We next prove the classification statement for the quadratic-quadratic case, summarized in Fig.~\ref{fig:QQ-flowchart}. 
The two quadratic equations are 
\[
    P_1(\vb*{k})=\vb*{k}^{\mathsf T}\vb*{A}\vb*{k}=0,
    \qquad
    P_2(\vb*{k})=\vb*{k}^{\mathsf T}\vb*{B}\vb*{k}=0,
    \label{eq:SM_QQ_def}
\]
where $\vb*{A}$ and $\vb*{B}$ are real, symmetric, nondegenerate, and indefinite $3\times3$ matrices.
A direct elimination of variables generally leads to a quartic equation.
Instead, we use the following one-parameter family of quadratic forms:
\[
    P_\lambda(\vb*{k}) \coloneqq 
    P_1(\vb*{k})-\lambda P_2(\vb*{k}) =
    \vb*{k}^{\mathsf T}\vb*{M}(\lambda)\vb*{k}, 
    \qquad
    \vb*{M}(\lambda) \coloneqq \vb*{A}-\lambda\vb*{B},
    \qquad \lambda\in\R .
    \label{eq:SM_QQ_pencil}
\]
We take $\lambda$ to be real so that $\vb*{M}(\lambda)$ remains real and symmetric.
For any real $\lambda$, the pair of equations $P_\lambda=0$ and $P_2=0$ has exactly the same common zero set as the original pair in Eq.~\eqref{eq:SM_QQ_def} because $P_1=P_\lambda+\lambda P_2$.
Thus, we may replace $P_1$ by any member $P_\lambda$ of this family without changing the intersection set.

We use this freedom to choose a member whose quadratic form is degenerate.
Such values of $\lambda$ are determined by
\[ 
    \det \vb*{M}(\lambda)=0. 
    \label{eq:SM_QQ_det} 
\] 
Since $\det \vb*{M}(\lambda)$ is a real cubic polynomial, it has at least one real root.
Choose one real root and denote it by $\lambda_0$.
Then, the original intersection problem is equivalently written as
\[ 
    P_{\lambda_0}(\vb*{k})=0, 
    \qquad 
    P_2(\vb*{k})=0. 
    \label{eq:SM_QQ_reduction} 
\]
When Eq.~\eqref{eq:SM_QQ_det} has several real roots, it is sufficient to choose one convenient root.
Different roots give different degenerate quadratic forms, but after imposing $P_2=0$, they all describe the same common zero set.
When a simple real root exists, we choose such a root for convenience.
The remaining case is the triple-root case treated separately.

\begin{lemma}
If $\lambda_0$ is a simple real root of $\det\vb*{M}(\lambda)=0$, then $\rank \vb*{M}(\lambda_0)=2$.
\end{lemma}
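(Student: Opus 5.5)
The plan is to argue by contradiction through the derivative of the determinant, using Jacobi's formula. First, since $\vb*{B}$ is nondegenerate, $\det\vb*{M}(\lambda)=\det(\vb*{A}-\lambda\vb*{B})$ is a genuine cubic in $\lambda$, with leading coefficient $-\det\vb*{B}\neq0$. Hence the multiplicity of a root is well defined, and a simple root $\lambda_0$ is characterized by
\[
\det\vb*{M}(\lambda_0)=0,\qquad \frac{d}{d\lambda}\det\vb*{M}(\lambda)\Big|_{\lambda_0}\neq0 .
\]
Because $\det\vb*{M}(\lambda_0)=0$, we immediately have $\rank\vb*{M}(\lambda_0)\le2$. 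It therefore remains only to exclude $\rank\vb*{M}(\lambda_0)\le1$.

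The key step is Jacobi's formula for the derivative of a determinant:
\[
\frac{d}{d\lambda}\det\vb*{M}(\lambda)=\operatorname{tr}\!\big(\operatorname{adj}\vb*{M}(\lambda)\,\vb*{M}'(\lambda)\big)=-\operatorname{tr}\!\big(\operatorname{adj}\vb*{M}(\lambda)\,\vb*{B}\big).
\]
The entries of the adjugate of a $3\times3$ matrix are, up to sign, its $2\times2$ minors. If $\rank\vb*{M}(\lambda_0)\le1$, every $2\times2$ minor vanishes, so $\operatorname{adj}\vb*{M}(\lambda_0)=0$. Consequently the derivative of $\det\vb*{M}$ vanishes at $\lambda_0$. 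This makes $\lambda_0$ a root of multiplicity at least two, contradicting simplicity. This argument covers both rank one and rank zero, the latter being the case $\vb*{A}=\lambda_0\vb*{B}$, at once. Therefore $\rank\vb*{M}(\lambda_0)=2$.

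I do not expect a real obstacle; the only point needing care is the claim that $\operatorname{adj}\vb*{M}=0$ exactly when the rank is at most one. If a more elementary route is preferred, an alternative check is available. Write a rank-$\le1$ symmetric matrix in an orthonormal eigenbasis as $\vb*{M}(\lambda_0)=\mu\,\vb*{e}\vb*{e}^{\mathsf T}$, and expand $\det(\vb*{M}(\lambda_0)-\epsilon\vb*{B})$ in $\epsilon$. Every term then carries at least two factors of $\epsilon$, so there is no linear term. This again shows that the root is not simple.
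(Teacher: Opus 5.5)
Your proposal is correct and follows essentially the same route as the paper: you assume $\rank\vb*{M}(\lambda_0)\le1$, observe that $\operatorname{adj}\vb*{M}(\lambda_0)=0$ because all $2\times2$ minors vanish, and apply Jacobi's formula to get a vanishing derivative, contradicting simplicity. Your extra remarks are sound and not needed: $\det\vb*{B}\neq0$ makes the multiplicity well defined, and the alternative $\epsilon$-expansion check is also valid.
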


\begin{proof}
Since $\det \vb*{M}(\lambda_0)=0$, the matrix $\vb*{M}(\lambda_0)$ is singular; hence $\rank \vb*{M}(\lambda_0) \leq 2$.
For contradiction, suppose that $\rank\vb*{M}(\lambda_0)\leq1$.
For a $3\times3$ matrix, this implies
\[
    \operatorname{adj}\vb*{M}(\lambda_0)=0,
\]
since all cofactors vanish.
Using Jacobi's formula,
\[
    \frac{d}{d\lambda} \det \vb*{M}(\lambda) =
    \tr\!\left(
        \operatorname{adj}\vb*{M}(\lambda)
        \frac{d\vb*{M}(\lambda)}{d\lambda}
    \right) =
    -\tr\!\left(
        \operatorname{adj}\vb*{M}(\lambda)\vb*{B}
    \right),
\]
we obtain
\[
    \left. \frac{d}{d\lambda} \det \vb*{M}(\lambda) \right|_{\lambda=\lambda_0} = 0.
\]
This contradicts the assumption that $\lambda_0$ is a simple root.
Therefore $\rank\vb*{M}(\lambda_0)=2$.
\end{proof}

\noindent
Therefore, the rank must be checked only in the triple-root cases.
In the following, we exclude the rank-zero case because $\vb*{M}(\lambda_0)=0$ would imply $\vb*{A}=\lambda_0\vb*{B}$, so the two constraints in Eq.~\eqref{eq:SM_QQ_def} would not be independent.

We now analyze the simple-root case.
Let $e_1$ and $e_2$ be the two nonzero eigenvalues of $\vb*{M}(\lambda_0)$, and let $\vb*{u}_1$ and $\vb*{u}_2$ be the corresponding orthonormal eigenvectors.
Let $\vb*{v}$ span the kernel of $\vb*{M}(\lambda_0)$.
Then
\[
    P_{\lambda_0}(\vb*{k}) =
    e_1(\vb*{u}_1^{\mathsf T}\vb*{k})^2
    + e_2(\vb*{u}_2^{\mathsf T}\vb*{k})^2.
\]

First, we consider the case $e_1e_2<0$.
Without loss of generality, take $e_1>0$ and $e_2<0$.
Then $P_{\lambda_0}=0$ factorizes as
\[
    P_{\lambda_0}(\vb*{k}) =
    \left(\sqrt{e_1}\,\vb*{u}_1^{\mathsf T}\vb*{k}
    - \sqrt{-e_2}\,\vb*{u}_2^{\mathsf T}\vb*{k}\right)
    \left(\sqrt{e_1}\,\vb*{u}_1^{\mathsf T}\vb*{k}
    + \sqrt{-e_2}\,\vb*{u}_2^{\mathsf T}\vb*{k}\right).
\]
Thus, the zero set of $P_{\lambda_0}=0$ is the union of two real planes,
\[
    \vb*{n}_+^{\mathsf T}\vb*{k}=0,
    \qquad
    \vb*{n}_-^{\mathsf T}\vb*{k}=0,
\]
where $\vb*{n}_\pm=\sqrt{e_1}\,\vb*{u}_1\pm\sqrt{-e_2}\,\vb*{u}_2$ are the normal vectors.
The common zeros of $P_1=0$ and $P_2=0$ are therefore obtained by intersecting each of these planes with the cone $P_2=0$.

Since the equation $P_2=0$ is unchanged by multiplying $P_2$ by an overall minus sign, we choose this sign so that $\vb*{B}$ has two positive eigenvalues and one negative eigenvalue.
Then the criterion in Eq.~\eqref{eq:eta_SM} applies directly to each plane.
We define
\[
    \eta_\pm =
    \vb*{n}_\pm^{\mathsf T} \vb*{B}^{-1} \vb*{n}_\pm .
\]
For each plane, the sign of $\eta_\pm$ determines the intersection with $P_2=0$:
\[
\begin{array}{rcl}
    \eta_\pm < 0
    & \longrightarrow &
    \text{no nonzero real line from that plane}, \\ [1mm]
    \eta_\pm = 0
    & \longrightarrow &
    \text{one tangent line from that plane}, \\ [1mm]
    \eta_\pm > 0
    & \longrightarrow &
    \text{two distinct real lines from that plane}.
\end{array}
\]

\begin{lemma}
Let $\lambda_0$ be a simple real root of $\det\vb*{M}(\lambda)=0$, and let $\vb*{v}$ span $\ker\vb*{M}(\lambda_0)$.
Then
\[
    P_2(\vb*{v}) = \vb*{v}^{\mathsf T}\vb*{B}\vb*{v} \neq 0.
\]
Consequently, if $P_{\lambda_0}=0$ splits into two real planes, the nonzero real lines obtained by intersecting these planes with $P_2=0$ are automatically distinct.
\label{lem:2}
\end{lemma}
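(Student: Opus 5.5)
We prove $P_2(\vb*{v})\neq 0$ by computing the derivative of $\det\vb*{M}(\lambda)$ at $\lambda_0$ through Jacobi's formula, as in the preceding lemma, and then read off the distinctness statement from it.

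By that lemma, $\rank\vb*{M}(\lambda_0)=2$. Hence $\operatorname{adj}\vb*{M}(\lambda_0)$ is a nonzero matrix of rank one.

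Since $\vb*{M}(\lambda_0)$ is symmetric, its adjugate is symmetric as well. It also satisfies $\vb*{M}(\lambda_0)\operatorname{adj}\vb*{M}(\lambda_0)=\det\vb*{M}(\lambda_0)\,\mathds{1}=0$. So every column of the adjugate lies in $\ker\vb*{M}(\lambda_0)=\mathrm{span}(\vb*{v})$, and therefore
\[
\operatorname{adj}\vb*{M}(\lambda_0)=c\,\vb*{v}\vb*{v}^{\mathsf T},\qquad c\in\R,\ c\neq 0.
\]

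Substituting this into Jacobi's formula gives
\[
\left.\frac{d}{d\lambda}\det\vb*{M}(\lambda)\right|_{\lambda_0}=-\tr\!\left(c\,\vb*{v}\vb*{v}^{\mathsf T}\vb*{B}\right)=-c\,\vb*{v}^{\mathsf T}\vb*{B}\vb*{v}.
\]
Because $\lambda_0$ is a simple root, the left-hand side is nonzero. This forces $\vb*{v}^{\mathsf T}\vb*{B}\vb*{v}\neq0$, which is the claim $P_2(\vb*{v})\neq0$.

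For the consequence, note that in the split case $P_{\lambda_0}$ has no component along $\vb*{v}$ in the eigenbasis. Hence both planes $\vb*{n}_\pm^{\mathsf T}\vb*{k}=0$ contain $\vb*{v}$, and their intersection is exactly the line $\R\vb*{v}$, since $\vb*{n}_+$ and $\vb*{n}_-$ are linearly independent when $e_1e_2\neq0$.

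A real line lying in both planes would therefore be $\R\vb*{v}$. It lies on the cone $P_2=0$ only if $P_2(\vb*{v})=0$, which we have just excluded. So the two planes share no line of the intersection. Within each plane, the lines are already distinct by the $\eta_\pm$ classification, so all the nonzero real lines are distinct.

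The main obstacle is the identification of the adjugate with $c\,\vb*{v}\vb*{v}^{\mathsf T}$, with $c\neq0$. The nonvanishing of $c$ uses rank exactly two, since at least one $2\times2$ minor is then nonzero. The column-space argument uses symmetry together with $\vb*{M}\operatorname{adj}\vb*{M}=0$. Everything after that is immediate.
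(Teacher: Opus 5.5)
Your proof is correct and follows the paper's argument: Jacobi's formula with $\operatorname{adj}\vb*{M}(\lambda_0)=c\,\vb*{v}\vb*{v}^{\mathsf T}$ gives a nonzero derivative $-c\,\vb*{v}^{\mathsf T}\vb*{B}\vb*{v}$, and distinctness follows because the two planes meet exactly in the kernel line $\R\vb*{v}$, which cannot lie on the cone. You also justify the adjugate form, using symmetry, $\vb*{M}\operatorname{adj}\vb*{M}=0$, and a nonzero $2\times2$ minor; the paper only asserts that form.
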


\begin{proof}
For a rank-two $3\times3$ matrix, the adjugate matrix has rank one and is supported on the kernel direction.
Hence
\[
    \operatorname{adj}\vb*{M}(\lambda_0)
    =
    c\,\vb*{v}\vb*{v}^{\mathsf T},
    \quad c\neq0.
\]
Using Jacobi's formula at $\lambda=\lambda_0$, we obtain
\[
    \left. \frac{d}{d\lambda} \det \vb*{M}(\lambda) \right|_{\lambda=\lambda_0}
    =
    -\tr\!\left[
        \operatorname{adj}\vb*{M}(\lambda_0)\vb*{B}
    \right]
    =
    -c\,\vb*{v}^{\mathsf T}\vb*{B}\vb*{v}.
\]
If $P_2(\vb*{v})=0$, then the derivative of $\det\vb*{M}(\lambda)$ at $\lambda=\lambda_0$ would vanish.
This contradicts the assumption that $\lambda_0$ is a simple root.
Therefore $P_2(\vb*{v})\neq0$.

Now suppose that a line obtained from the plane $\vb*{n}_+^{\mathsf T}\vb*{k}=0$ coincides with a line obtained from the plane $\vb*{n}_-^{\mathsf T}\vb*{k}=0$.
The coincident line must then lie in the intersection of the two planes.
This intersection is precisely $\ker\vb*{M}(\lambda_0)$.
Hence, the coincident line would have to satisfy $P_2(\vb*{v})=0$, which is impossible for a simple root.
Therefore, the lines obtained from the two planes are distinct.
\end{proof}

\noindent
Thus, in the case $e_1e_2<0$, the full intersection is the union of the lines obtained from the two plane-cone tests above, with no double counting between the two planes.

We next consider the case $e_1e_2>0$.
In this case, the two nonzero eigenvalues have the same sign, so $P_{\lambda_0}(\vb*{k})=0$
forces
\[
    \vb*{u}_1^{\mathsf T}\vb*{k}=0,
    \qquad
    \vb*{u}_2^{\mathsf T}\vb*{k}=0.
\]
Thus, the real zero set of $P_{\lambda_0}=0$ is only the kernel direction spanned by $\vb*{v}$.
However, Lemma~\ref{lem:2} shows that this direction does not satisfy $P_2=0$ for a simple root.
Therefore, a simple root with $e_1e_2>0$ gives only the origin and no nonzero intersection line.

We next move on to the triple-root cases, which require separate treatment.
Let $\lambda_\ast$ be a triple root of $\det\vb*M(\lambda)$. In this case, $\rank\vb*M(\lambda_\ast)$ is either two or one, so we analyze the two possibilities separately.

\begin{lemma}
Suppose that $\rank \vb*{M}(\lambda_\ast)=2$.
Then $\vb*{M}(\lambda_\ast)$ is indefinite, and $P_{\lambda_\ast}=0$ splits into two real planes. One of them is tangent to the cone $P_2=0$, while the other is secant. The two plane-cone intersections share the kernel line of $\vb*{M}(\lambda_\ast)$. After identifying this common line, the common zero set of $P_1=0$ and $P_2=0$ consists of two distinct real lines, including one tangent line.
\end{lemma}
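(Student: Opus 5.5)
The plan is to reduce the pair $(\vb*{A},\vb*{B})$ to a normal form by a simultaneous real congruence $\vb*{k}\to \vb*{S}\vb*{k}$. Such a congruence is an invertible linear change of coordinates, so it preserves the number of real lines in the common zero set, tangency, and the signature of each quadratic form. The triple-root hypothesis is a statement about $\vb*{B}^{-1}\vb*{A}$. Since $\vb*{B}$ is nondegenerate, $\det\vb*{M}(\lambda)=\det\vb*{B}\,\det(\vb*{B}^{-1}\vb*{A}-\lambda\mathds 1)$, so $\vb*{B}^{-1}\vb*{A}$ has the single real eigenvalue $\lambda_\ast$ with algebraic multiplicity three. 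The assumption $\rank\vb*{M}(\lambda_\ast)=2$ says its geometric multiplicity is one. Hence $\vb*{B}^{-1}\vb*{A}$ is a single $3\times3$ Jordan block, and $\vb*{N}:=\vb*{B}^{-1}\vb*{M}(\lambda_\ast)$ is nilpotent of index three.

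The key step is to build a basis adapted to this block. Choose $\vb*{w}$ with $\vb*{N}^2\vb*{w}\neq0$ and set $\vb*{f}_3=\vb*{w}$, $\vb*{f}_2=\vb*{N}\vb*{w}$, $\vb*{f}_1=\vb*{N}^2\vb*{w}$; the vector $\vb*{f}_1$ spans $\ker\vb*{M}(\lambda_\ast)$. Because $\vb*{N}$ is self-adjoint with respect to the bilinear form $\langle \vb*x,\vb*y\rangle_B=\vb*x^{\mathsf T}\vb*{B}\vb*y$, every Gram entry $\langle\vb*{N}^i\vb*w,\vb*{N}^j\vb*w\rangle_B$ depends only on $i+j$, and it vanishes for $i+j\ge3$. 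I would then use the standard triangular adjustment $\vb*w\to\vb*w+\alpha\vb*{N}\vb*w+\beta\vb*{N}^2\vb*w$ to remove the entries with $i+j=3$ and $i+j=4$. After this, the Gram matrix of $\vb*{B}$ in the basis $(\vb*f_1,\vb*f_2,\vb*f_3)$ is $\epsilon$ times the anti-diagonal exchange matrix, with $\epsilon=\pm1$ after rescaling, and $\vb*{M}(\lambda_\ast)=\vb*{B}\vb*{N}$ has Gram matrix $\epsilon\begin{pmatrix}0&1&0\\1&0&0\\0&0&0\end{pmatrix}$. In these coordinates
\[
P_{\lambda_\ast}=2\epsilon k_1k_2,\qquad P_2=\epsilon(2k_1k_3+k_2^2).
\]
I expect this normal-form step, namely establishing the congruence canonical form of a real symmetric pencil with a single nilpotent Jordan block, to be the main obstacle. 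I would either carry out the Gram-matrix adjustment explicitly as above or cite the Weierstrass canonical form of real symmetric pencils. As a consistency check, Jacobi's formula together with the vanishing first derivative at a triple root already forces $\vb*v^{\mathsf T}\vb*B\vb*v=0$ for the kernel vector, exactly as in the proof of Lemma~\ref{lem:2}. The normal form reproduces this, since $P_2(\vb*f_1)=0$.

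With the normal form in hand, the remaining claims are immediate. The form $2\epsilon k_1k_2$ has eigenvalues $\pm\epsilon$ and $0$, so it is indefinite and splits into the real planes $k_1=0$ and $k_2=0$. On $k_1=0$, $P_2$ reduces to $\epsilon k_2^2$, giving the single double line $k_1=k_2=0$. That plane is therefore tangent to the cone $P_2=0$, consistent with $\eta=0$ in the linear-quadratic theorem. On $k_2=0$, $P_2$ reduces to $2\epsilon k_1k_3$, giving the two distinct lines $k_1=0$ and $k_3=0$, so that plane is secant. The line $k_1=k_2=0$ is the $\vb*f_1$ axis, i.e.\ the kernel line of $\vb*{M}(\lambda_\ast)$, and it appears in both plane-cone intersections. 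After identifying it, the common zero set of $P_1=0$ and $P_2=0$ is the union of the $\vb*f_1$ and $\vb*f_3$ axes: two distinct real lines, one of which is the tangent line. Transforming back by $\vb*S$ gives the statement in the original coordinates.
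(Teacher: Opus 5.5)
Your proof is correct in substance but takes a different route from the paper. The paper never builds a simultaneous normal form for the pencil. It proves indefiniteness by contradiction: it normalizes a semidefinite $\vb*{M}(\lambda_\ast)$ to $\mathrm{diag}(1,1,0)$ and shows that killing the $\mu$ and $\mu^2$ coefficients of $\det[\vb*{M}(\lambda_\ast)-\mu\vb*{B}]$ forces the third row and column of $\vb*{B}$ to vanish. It then passes to coordinates in which $P_{\lambda_\ast}=k'_xk'_y$ and expands the determinant again to get $h=0$ and $fg=0$. It uses $\det\vb*{B}\neq0$ to conclude that exactly one of $f,g$ vanishes, and finally restricts $P_2$ to each plane.

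You instead note that $\vb*{B}^{-1}\vb*{A}$ is a single $3\times3$ Jordan block and that $\vb*{N}$ is $\vb*{B}$-self-adjoint. The Gram matrix on a Jordan chain is therefore Hankel, and you reduce it to the Weierstrass canonical form. The paper's argument is more elementary, using only low-order determinant expansions in the spirit of its Jacobi-formula lemmas, but it needs two coordinate choices and a case split on $f$ versus $g$. Yours costs some Jordan-form machinery but gives the whole pencil up to congruence at once. Indefiniteness, the splitting into planes, which plane is tangent, and the identification of the shared line with the kernel are all read off directly. It also shows that every rank-two triple-root configuration is the same up to the sign $\epsilon$.

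Two bookkeeping points need fixing.

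\emph{The normalization.} It requires $g_2:=\langle\vb*{w},\vb*{N}^2\vb*{w}\rangle_B\neq0$. This holds because in the basis $(\vb*{N}^2\vb*{w},\vb*{N}\vb*{w},\vb*{w})$ the Gram matrix of $\vb*{B}$ is Hankel with zeros above the antidiagonal, so its determinant is $-g_2^3$. The entries you actually need to remove are $\langle\vb*{w},\vb*{N}\vb*{w}\rangle_B$ and $\langle\vb*{w},\vb*{w}\rangle_B$, i.e.\ power sums $i+j=1$ and $0$; those with $i+j\ge3$ already vanish, as you yourself noted. The choices $\alpha=-g_1/(2g_2)$ and $\beta=-(g_0+2\alpha g_1+\alpha^2g_2)/(2g_2)$ do this.

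\emph{The normal form of $\vb*{M}(\lambda_\ast)$.} With $(\vb*{f}_1,\vb*{f}_2,\vb*{f}_3)=(\vb*{N}^2\vb*{w},\vb*{N}\vb*{w},\vb*{w})$ one has $\vb*{f}_i^{\mathsf T}\vb*{M}(\lambda_\ast)\vb*{f}_j=\langle\vb*{f}_i,\vb*{N}\vb*{f}_j\rangle_B$. This gives $P_{\lambda_\ast}=2\epsilon k_2k_3$, not $2\epsilon k_1k_2$. Your displayed form belongs to the reversed ordering $(\vb*{w},\vb*{N}\vb*{w},\vb*{N}^2\vb*{w})$. As written, the line $k_1=k_2=0$ is the $\vb*{f}_3$-axis rather than the kernel line, contrary to your final paragraph.

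Since $P_2=\epsilon(2k_1k_3+k_2^2)$ is invariant under $k_1\leftrightarrow k_3$, relabeling repairs this. Your tangent/secant analysis and the final count go through unchanged: two distinct lines, one of them the tangent kernel line.
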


\begin{proof}
We introduce $\lambda=\lambda_\ast+\mu$ ($\mu\in\R$), and then $\vb*{M}(\lambda_\ast+\mu) = \vb*{M}(\lambda_\ast) - \mu \vb*{B}$.
We first show that $\vb*{M}(\lambda_\ast)$ is indefinite. 
For contradiction, suppose that it is semidefinite. 
Since $\vb*{M}(\lambda_\ast)$ is real symmetric and has rank two, we may choose coordinates such that
\[
    \vb*{M}(\lambda_\ast) = \operatorname{diag}(1,1,0).
\]
Writing
\[
    \vb*{B}=
    \begin{pmatrix}
        \alpha & \delta & \rho \\
        \delta & \beta  & \sigma \\
        \rho   & \sigma & \gamma
    \end{pmatrix},
\]
we expand
\[
    \det(\vb*{M}(\lambda_\ast) - \mu \vb*{B})
    = - \gamma \mu +
    (\alpha\gamma+\beta\gamma-\rho^2-\sigma^2)\mu^2
    +O(\mu^3).
\]
Since $\lambda_\ast$ is a triple root, the coefficients of $\mu$ and $\mu^2$ must vanish. 
Thus
\[
    \gamma=0,
    \qquad
    \rho=\sigma=0.
\]
But then the third row and the third column of $\vb*{B}$ vanish, so $\vb*{B}$ is singular. 
This contradicts the assumption that $\vb*{B}$ is nondegenerate. 
Therefore $\vb*{M}(\lambda_\ast)$ must be indefinite.

We now choose coordinates $(k'_x,k'_y,k'_z)$ so that the two real planes are $k'_x=0$ and $k'_y=0$. 
In these coordinates,
\[
    \vb*{M}(\lambda_\ast) =
    \begin{pmatrix}
        0 & 1/2 & 0 \\
        1/2 & 0 & 0 \\
        0 & 0 & 0
    \end{pmatrix}.
\]
The common line of the two planes is $k'_x=k'_y=0$.
In the same coordinates, we write
\[
    \vb*{B} =
    \begin{pmatrix}
        a & d & f \\
        d & b & g \\
        f & g & h
    \end{pmatrix}.
\]
Expanding the characteristic polynomial around the triple root gives
\[
    \det\!\left[\vb*{M}(\lambda_\ast)-\mu\vb*{B}\right] =
    \frac{h}{4}\mu+(fg-dh)\mu^2+O(\mu^3).
\]
Since $\lambda_\ast$ is a triple root, the coefficients of $\mu$ and $\mu^2$ must vanish. Therefore
\[
    h=0, \qquad fg=0.
\]
The first condition $h=0$ has a direct geometric meaning. 
On the common line, we have $k'_x=k'_y=0$, and hence
\[
    P_2 = h (k'_z)^2.
\]
Thus $h=0$ means that the kernel line lies on the cone $P_2=0$.
The second condition $fg=0$, together with the nondegeneracy of $\vb*{B}$, implies that exactly one of $f$ and $g$ is zero. 

We now restrict $P_2$ to the two planes of $P_{\lambda_\ast}=0$. 
On the plane $k'_x=0$,
\[
    P_2 = b (k'_y)^2+2g k'_y k'_z = k'_y(bk'_y+2gk'_z),
\]
where we used $h=0$. 
Here, the factor $k'_y=0$ is precisely the shared line. 
If $g=0$, this restriction becomes $b (k'_y)^2$, so the intersection is the double line; hence, the plane $k'_x=0$ is tangent to the cone $P_2=0$. 
If $g\neq0$, the restriction gives two distinct real lines, $k'_x=k'_y=0$ and $bk'_y+2gk'_z=0$; hence, the plane $k'_x=0$ is secant.
Similarly, on the plane $k'_y=0$,
\[
    P_2 = a (k'_x)^2+2f k'_x k'_z = k'_x(ak'_x+2fk'_z).
\]
Therefore, this plane is tangent if $f=0$, and secant if $f\neq0$.
Since exactly one of $f$ and $g$ is zero, exactly one of the two planes is tangent to $P_2=0$ along the kernel line, while the other is secant and gives the kernel line together with one additional real line. 
After identifying the shared line, the total common zero set consists of two distinct real lines, one of which is the tangent line.
\end{proof}

\begin{lemma}
Suppose that $\rank \vb*{M}(\lambda_\ast)=1$.
The common zero set of $P_1=0$ and $P_2=0$ consists of one tangent line.
\end{lemma}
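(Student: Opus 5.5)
Since $\rank\vb*{M}(\lambda_\ast)=1$ and $\vb*{M}(\lambda_\ast)$ is real symmetric, I will write $\vb*{M}(\lambda_\ast)=\epsilon\,\vb*{w}\vb*{w}^{\mathsf T}$ with $\vb*{w}\neq0$ real and $\epsilon=\pm1$. Then
\[
P_{\lambda_\ast}(\vb*{k})=\epsilon(\vb*{w}^{\mathsf T}\vb*{k})^2 .
\]
Its real zero set is the single plane $\vb*{w}^{\mathsf T}\vb*{k}=0$, counted twice. By the pencil reduction, the common zero set of $P_1=0$ and $P_2=0$ equals the intersection of this plane with the cone $P_2=0$. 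The problem is therefore exactly a linear-quadratic test with normal vector $\vb*{n}=\vb*{w}$ and matrix $\vb*{B}$. We fix the overall sign of $P_2$ so that $\vb*{B}$ has two positive eigenvalues and one negative eigenvalue; this is possible because $\vb*{B}$ is nondegenerate and indefinite. By the linear-quadratic Theorem, it then suffices to show that
\[
\eta=\vb*{w}^{\mathsf T}\vb*{B}^{-1}\vb*{w}=0 .
\]
The sign flip of $\vb*{B}$ does not affect whether $\eta$ vanishes.

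The key step is to extract this from the triple-root condition. Set $\lambda=\lambda_\ast+\mu$, so that $\vb*{M}(\lambda)=\epsilon\vb*{w}\vb*{w}^{\mathsf T}-\mu\vb*{B}$. Since $\vb*{B}$ is invertible, the matrix determinant lemma gives
\[
\det\!\left(-\mu\vb*{B}+\epsilon\vb*{w}\vb*{w}^{\mathsf T}\right)
=-\mu^3\det\vb*{B}+\epsilon\,\mu^2\det\vb*{B}\;\vb*{w}^{\mathsf T}\vb*{B}^{-1}\vb*{w}.
\]
This holds first for $\mu\neq0$ and extends as a polynomial identity, because both sides are cubic polynomials in $\mu$. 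The expansion is exact: there is no $\mu^0$ or $\mu^1$ term, consistent with rank one. A triple root at $\mu=0$ forces the $\mu^2$ coefficient to vanish. Since $\det\vb*{B}\neq0$, this gives $\vb*{w}^{\mathsf T}\vb*{B}^{-1}\vb*{w}=0$, i.e.\ $\eta=0$.

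By the linear-quadratic Theorem, the plane $\vb*{w}^{\mathsf T}\vb*{k}=0$ is then tangent to the cone $P_2=0$ and meets it in exactly one real line. Explicitly, this line is spanned by $\vb*{B}^{-1}\vb*{w}$: it lies in the plane because $\vb*{w}^{\mathsf T}\vb*{B}^{-1}\vb*{w}=0$, and on the cone because $(\vb*{B}^{-1}\vb*{w})^{\mathsf T}\vb*{B}(\vb*{B}^{-1}\vb*{w})=\eta=0$. Hence the common zero set is a single tangent line.

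The main point to check is the determinant expansion; everything else is a direct invocation of earlier results. There is one subtlety to note. The real zero set of the double plane is just one plane, so the set-theoretic reduction loses no information, unlike the rank-two case where two planes had to be tested separately.
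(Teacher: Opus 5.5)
Your proposal is correct and follows the paper's proof essentially step for step: you write $\vb*{M}(\lambda_\ast)$ as a rank-one outer product, expand $\det\vb*{M}(\lambda_\ast+\mu)$ with the matrix determinant lemma, and use the triple-root condition to kill the $\mu^2$ coefficient, giving $\vb*{w}^{\mathsf T}\vb*{B}^{-1}\vb*{w}=0$, after which the linear-quadratic tangency criterion finishes the argument. Your explicit identification of the tangent line as the span of $\vb*{B}^{-1}\vb*{w}$ and the remark that rescaling $\vb*{B}$ by $-1$ does not affect $\eta=0$ are small, correct additions.
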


\begin{proof} 
Since $\vb*{M}(\lambda_\ast)$ is real symmetric and has rank one, it can be written as 
\[ 
    \vb*{M}(\lambda_\ast) = e \, \vb*{n} \vb*{n}^{\mathsf T}, 
    \quad 
    e \neq 0, 
\] 
for some nonzero vector $\vb*{n}$. Therefore, 
\[  
    P_{\lambda_\ast}(\vb*{k}) = \vb*{k}^{\mathsf T} \vb*{M}(\lambda_\ast) \vb*{k} = e(\vb*{n}^{\mathsf T} \vb*{k})^2, 
\] 
and the equation $P_{\lambda_\ast}=0$ gives the double plane $\vb*{n}^{\mathsf T}\vb*{k}=0$. 
We now write $\lambda=\lambda_\ast+\mu$ ($\mu\in\R$). Then 
\[ 
    \vb*{M}(\lambda_\ast+\mu) = e\,\vb*{n}\vb*{n}^{\mathsf T} - \mu\vb*{B}. 
\] 
Using the matrix determinant lemma, 
\[ 
    \begin{aligned} 
        \det\vb*{M}(\lambda_\ast+\mu) 
        & = \det(-\mu\vb*{B}) \left( 1 + e \, \vb*{n}^{\mathsf T}(-\mu\vb*{B})^{-1}\vb*{n} \right) \\ 
        & = - \mu^3 \det\vb*{B} + e \, \mu^2 (\det\vb*{B}) \, \vb*{n}^{\mathsf T} \vb*{B}^{-1} \vb*{n}. 
    \end{aligned} 
\] 
Since $\lambda_\ast$ is a triple root, the expansion of $\det\vb*{M}(\lambda_\ast+\mu)$ must start at order $\mu^3$.
Therefore, the coefficient of $\mu^2$ must vanish. Due to $e\neq0$ and $\det\vb*{B}\neq0$, this gives 
\[ 
    \vb*{n}^{\mathsf T}\vb*{B}^{-1}\vb*{n}=0.
\] 
By the criterion in Eq.~\eqref{eq:eta_SM}, this means that the plane $\vb*{n}^{\mathsf T}\vb*{k}=0$ is tangent to the cone $P_2=0$, and their intersection consists of one real tangent line. 
\end{proof}
\noindent
Combining the two lemmas, the exceptional triple-root case is completely determined by $\rank\vb*{M}(\lambda_\ast)$. If $\rank\vb*{M}(\lambda_\ast)=1$, the intersection consists of one tangent line. If $\rank\vb*{M}(\lambda_\ast)=2$, the intersection consists of two distinct real lines, including one tangent line. 

\begin{figure}[h]
\centering
\resizebox{\linewidth}{!}{\begin{tikzpicture}[
    >=Latex,
    line width=0.9pt,
    font=\small,
    proc/.style={
        draw,
        rounded corners=2pt,
        align=center,
        inner xsep=5pt,
        inner ysep=5pt,
        minimum height=8mm
    },
    decision/.style={
        draw,
        diamond,
        aspect=2.2,
        align=center,
        inner sep=1.5pt
    },
    arrow/.style={->, line width=0.9pt},
    lab/.style={fill=white, inner sep=1pt, font=\scriptsize}
]

\definecolor{flowblue}{RGB}{220,238,255}
\definecolor{flowyellow}{RGB}{255,241,184}
\definecolor{flowgreen}{RGB}{223,243,223}
\definecolor{flowgreenlight}{RGB}{237,248,237}
\definecolor{floworange}{RGB}{255,226,204}
\definecolor{floworangelight}{RGB}{255,240,227}
\definecolor{flowviolet}{RGB}{235,230,255}

\node[proc, fill=flowblue, text width=40mm] (start) at (0,0)
{Two constraints:\\[1mm]
$P_1(\vb*{k})=0,\quad P_2(\vb*{k})=0$};

\node[proc, fill=flowblue, text width=58mm] (pencil) at (0,-1.65)
{One-parameter family:\\[1mm]
$\vb*{M}(\lambda)=\vb*{A}-\lambda\vb*{B}$};

\draw[arrow] (start.south) -- (pencil.north);

\node[decision, fill=flowyellow, text width=38mm] (simple) at (0,-4.0)
{Does $\det \vb*{M}(\lambda)=0$ have\\
a simple real root $\lambda_0$?};

\draw[arrow] (pencil.south) -- (simple.north);

\node[proc, fill=flowgreenlight, text width=44mm] (choose) at (-5.3,-5.2)
{Choose $\lambda_0$\\
$\rank\mathbf{M}(\lambda_0)=2$};

\node[decision, fill=flowgreen, text width=28mm] (sign) at (-5.3,-7.05)
{Is $e_1e_2<0$?};

\node[proc, fill=flowgreenlight, text width=28mm] (origin) at (-8.2,-8.35)
{origin only};

\node[proc, fill=flowgreenlight, text width=42mm] (planes) at (-2.5,-8.45)
{$P_{\lambda_0}=0$ splits into\\
two real planes};

\node[proc, fill=flowgreenlight, text width=42mm] (eta) at (-2.5,-9.9)
{For each plane $\vb*{n}_{\pm}^{\mathsf T}\vb*{k}=0$,\\
evaluate $\eta_{\pm}=\vb*{n}_{\pm}^{\mathsf T}\vb*{B}^{-1}\vb*{n}_{\pm}$};

\node[proc, fill=flowgreenlight, text width=42mm] (etaout) at (-2.5,-11.82)
{$\eta_{\pm}<0$: no real line,\\
$\eta_{\pm}=0$: one tangent line,\\
$\eta_{\pm}>0$: two distinct lines.\\[1mm]
Total: sum over $\pm$ planes};

\node[proc, fill=floworangelight, text width=38mm] (triple) at (5.3,-5.20)
{triple root $\lambda_\ast$};

\node[decision, fill=floworange, text width=32mm] (rank) at (5.3,-7.10)
{Rank of $\vb*{M}(\lambda_\ast)$?};

\node[proc, fill=floworangelight, text width=38mm] (ranktwo) at (2.5,-9.05)
{two distinct real lines,\\
including one tangent line};

\node[proc, fill=floworangelight, text width=35mm] (rankone) at (8.1,-9.05)
{one tangent line};

\draw[arrow] (triple.south) -- (rank.north);

\draw[arrow]
    (rank.west)
    -- node[lab, above] {rank $2$}
    (ranktwo.north |- rank.west)
    -- (ranktwo.north);

\draw[arrow]
    (rank.east)
    -- node[lab, above] {rank $1$}
    (rankone.north |- rank.east)
    -- (rankone.north);

\draw[arrow]
    (simple.west)
    -- node[lab, above] {Yes}
    (choose.north |- simple.west)
    -- (choose.north);

\draw[arrow]
    (simple.east)
    -- node[lab, above] {No}
    (triple.north |- simple.east)
    -- (triple.north);

\draw[arrow] (choose) -- (sign);

\draw[arrow]
    (sign.west)
    -- node[lab, above] {No}
    (origin.north |- sign.west)
    -- (origin.north);

\draw[arrow]
    (sign.east)
    -- node[lab, above] {Yes}
    (planes.north |- sign.east)
    -- (planes.north);

\draw[arrow] (planes) -- (eta);
\draw[arrow] (eta) -- (etaout);

\end{tikzpicture}}
\caption{
Flowchart for the quadratic-quadratic classification. 
Starting from the one-parameter family $\vb*{M}(\lambda)=\vb*{A}-\lambda\vb*{B}$, the classification is determined by whether $\det\vb*{M}(\lambda)=0$ has a simple real root. 
If a simple real root $\lambda_0$ exists, the intersection geometry is obtained from the signature of $\vb*{M}(\lambda_0)$ and the plane-cone tests for $\eta_{\pm}=\vb*{n}_{\pm}^{\mathsf T}\vb*{B}^{-1}\vb*{n}_{\pm}$. 
If no simple real root exists, the cubic has a triple root $\lambda_\ast$, and the remaining classification is determined by $\rank\vb*{M}(\lambda_\ast)$.}
\label{fig:QQ-flowchart}
\end{figure}

\subsection{The linear-(linear-plus-cubic) case}

We prove the classification statement for the linear-(linear-plus-cubic) case.
The two equations are
\[
   P_1(\vb*{k}) = 0, 
   \qquad
   P_2(\vb*{k}) = c(\vb*{k}) + \ell(\vb*{k}) = 0.
\]
Here $P_1(\vb*{k})$ and $\ell(\vb*{k})$ are homogeneous linear polynomials, and $c(\vb*{k})$ is a homogeneous cubic polynomial.
Since $P_1(\vb*{k})=0$ defines a plane through the origin, we can choose coordinates such that $P_1(\vb*{k}) = k_z$.
The intersection problem is then reduced to a plane curve on $k_z=0$. 
We define the restricted polynomials
\[
    c_3(k_x,k_y) \coloneqq c(k_x,k_y,0),
    \qquad
    \ell_1(k_x,k_y) \coloneqq \ell(k_x,k_y,0),
\]
and write $P_2(k_x,k_y)=c_3(k_x,k_y)+\ell_1(k_x,k_y)$.

Now we assume that \(\ell_1\neq0\), so that the restriction is genuinely of linear-plus-cubic form. 
We can make an invertible linear change of coordinates within the plane, and after relabeling the new coordinates as $k_x$ and $k_y$, set $\ell_1(k_x,k_y)=k_x$.
The cubic part is then a general homogeneous binary cubic, 
\[ 
    c_3(k_x,k_y) = a k_x^3+b k_x^2k_y+c k_xk_y^2+d k_y^3, 
\] 
and the plane curve takes the normal form 
\[
    P_2(k_x,k_y) = k_x \left( 1+a k_x^2+b k_xk_y+c k_y^2 \right) + d k_y^3. 
    \label{seqn.normal.form}
\]

We first show that $d=0$ is the necessary and sufficient condition for the curve to be reducible; equivalently, $d\neq0$ is the condition for irreducibility.

If $d=0$, then
\begin{equation}
    P_2(k_x,k_y)=k_x\left(1+a k_x^2+bk_xk_y+ck_y^2\right),
\end{equation}
so the curve is reducible. Conversely, suppose that $P_2$ is reducible over $\mathbb R$. Since $P_2$ has degree three, it has a real linear factor.

First consider the case in which the linear factor passes through the origin. Since the degree-one part of $P_2$ is exactly $k_x$, any such linear factor must divide $k_x$. Hence the factor is proportional to $k_x$. This forces the cubic homogeneous part
\begin{equation}
    a k_x^3+b k_x^2k_y+ck_xk_y^2+d k_y^3
\end{equation}
to be divisible by $k_x$ and therefore $d=0$.

It remains to exclude the possibility of a linear factor not passing through the origin. Write such a factorization as
\begin{equation}
    P_2(k_x,k_y)=(pk_x+qk_y+r)
    \left(Ak_x^2+Bk_xk_y+Ck_y^2+Dk_x+Ek_y+F\right),
    \qquad r\neq 0 .
\end{equation}
Since $P_2$ has no constant term, the constant term of the product gives
\begin{equation}
    rF=0,
\end{equation}
and hence $F=0$. Matching the degree-one terms gives
\begin{equation}
    rD=1,\qquad rE=0,
\end{equation}
so
\begin{equation}
    D=\frac{1}{r},\qquad E=0 .
\end{equation}
Now the quadratic terms of $P_2$ vanish identically. In particular, the coefficient of $k_y^2$ in the product is
\begin{equation}
    rC+qE=rC,
\end{equation}
and therefore $C=0$. But then, the coefficient of $k_y^3$ in the product is
\begin{equation}
    qC=0.
\end{equation}
Thus the coefficient of $k_y^3$ in $P_2$ must vanish, namely $d=0$.

We have shown that every real factorization of $P_2$ forces $d=0$. Therefore
\begin{equation}
    d\neq 0
    \quad\Longrightarrow\quad
    P_2(k_x,k_y)\text{ is irreducible over }\mathbb R .
\end{equation}
Together with the obvious factorization for $d=0$, this proves
\begin{equation}
    P_2(k_x,k_y)\text{ is reducible over }\mathbb R
    \quad\Longleftrightarrow\quad
    d=0.
\end{equation}

Next, before classifying the reducible and irreducible cases separately, we record a useful structural property: finite singularities can occur only in the reducible case.
\begin{theorem}
For the normal form in Eq.~\eqref{seqn.normal.form}, if the curve $P_2(k_x,k_y)=0$ has a finite singular point, such as a self-intersection or cusp, then the polynomial $P_2(k_x,k_y)$ is reducible.
\label{thm:2}
\end{theorem}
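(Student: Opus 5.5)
The plan is to prove the contrapositive: assuming $d\neq0$, we show that the affine curve $P_2(k_x,k_y)=0$ in the normal form of Eq.~\eqref{seqn.normal.form} has no finite singular point. By the irreducibility criterion established just above ($P_2$ is reducible over $\R$ if and only if $d=0$), this is equivalent to the statement. A finite singular point is a point $(k_x,k_y)\in\R^2$ at which
\[
P_2=\partial_{k_x}P_2=\partial_{k_y}P_2=0 .
\]
Rather than solving these three equations directly, I would exploit the graded structure $P_2=c_3+\ell_1$, with $c_3$ homogeneous cubic and $\ell_1=k_x$ linear.

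The key step is the Euler identity for homogeneous polynomials, applied to each graded piece separately:
\[
k_x\partial_{k_x}P_2+k_y\partial_{k_y}P_2 = 3c_3(k_x,k_y)+\ell_1(k_x,k_y).
\]
At a singular point the left-hand side vanishes because both partial derivatives vanish. Combining this with $P_2=c_3+\ell_1=0$ gives a $2\times2$ linear system in $c_3$ and $\ell_1$ with nonzero determinant, so $c_3=\ell_1=0$ at that point. Hence $k_x=0$. Substituting into the cubic gives $c_3(0,k_y)=d\,k_y^3=0$, and since $d\neq0$ this forces $k_y=0$. The only candidate singular point is therefore the origin.

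It remains to rule out the origin. There $\partial_{k_x}P_2(0,0)=1\neq0$ because of the linear term $k_x$, which is exactly the remark in the main text that a nonzero linear term makes the origin a regular point. So no finite singular point exists when $d\neq0$. Equivalently, any finite singular point, such as a node or cusp, forces $d=0$, and then $P_2=k_x(1+ak_x^2+bk_xk_y+ck_y^2)$ is reducible. For completeness I would add that in the reducible case singular points do occur, for example at intersections of the line $k_x=0$ with the conic factor. This is consistent with, though not needed for, the one-directional statement.

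I do not expect a serious obstacle. The only point needing care is the elimination step: it must be made explicit that the two relations $c_3+\ell_1=0$ and $3c_3+\ell_1=0$ are independent and hence force both graded pieces to vanish. This uses only that the degrees $1$ and $3$ differ, which is precisely the inhomogeneity emphasized in the main text.
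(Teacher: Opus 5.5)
Your proposal is correct and follows the paper's proof essentially step for step. The Euler identity gives $3c_3+\ell_1=0$, which together with $c_3+\ell_1=0$ forces $c_3=\ell_1=0$; the linear term rules out the origin; and $d\,k_y^3=0$ at $k_x=0$ then ties a singular point to $d=0$. The only difference is that you phrase this as the contrapositive ($d\neq0\Rightarrow$ no finite singular point), which needs only the trivial direction $d=0\Rightarrow$ reducible, just as the paper's direct argument does.
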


\begin{proof}
Let $(k_{x_0},k_{y_0})$ be a singular point of the curve $P_2(k_x,k_y)=0$. 
Then
\[
    P_2(k_{x_0},k_{y_0})=0,
    \qquad
    \partial_x P_2(k_{x_0},k_{y_0})=0,
    \qquad
    \partial_y P_2(k_{x_0},k_{y_0})=0 .
\]
From the last two equations, we have
\[
    \partial_x c_3(k_{x_0},k_{y_0})=-\partial_x \ell_1(k_{x_0},k_{y_0}),
    \qquad
    \partial_y c_3(k_{x_0},k_{y_0})=-\partial_y \ell_1(k_{x_0},k_{y_0}).
\]
Using Euler's theorem for homogeneous polynomials,
\[
    k_{x_0}\partial_x c_3(k_{x_0},k_{y_0})+k_{y_0}\partial_y c_3(k_{x_0},k_{y_0})=3c_3(k_{x_0},k_{y_0}),
    \quad
    k_{x_0}\partial_x \ell_1(k_{x_0},k_{y_0})+k_{y_0}\partial_y \ell_1(k_{x_0},k_{y_0})=\ell_1(k_{x_0},k_{y_0}),
\]
we obtain
\[
    3c_3(k_{x_0},k_{y_0}) = -\ell_1(k_{x_0},k_{y_0}).
\]
On the other hand, $P_2(k_{x_0},k_{y_0})=0$ gives $c_3(k_{x_0},k_{y_0})+\ell_1(k_{x_0},k_{y_0})=0$.
Combining these two relations yields
\[
    c_3(k_{x_0},k_{y_0})=0,
    \qquad
    \ell_1(k_{x_0},k_{y_0})=0.
\]

Since $\ell_1(k_x,k_y)=k_x$, the singular point satisfies $k_{x_0}=0$. 
Moreover, it cannot be the origin, because $\nabla P_2(0,0)=\nabla \ell_1\neq0$, whereas a singular point requires $\nabla P_2=0$. 
Thus $k_{y_0}\neq0$.
Since $c_3(k_{x_0},k_{y_0})=0$ and $k_{x_0}=0$, we have $d k_{y_0}^3=0$.
Since $k_{y_0}\neq0$, this implies $d=0$.
Therefore $c_3(k_x,k_y)=k_x(ak_x^2+bk_xk_y+ck_y^2)$, and hence
\[
    P_2(k_x,k_y) = k_x\left(ak_x^2+bk_xk_y+ck_y^2+1\right).
    \label{eq:reducible}
\]
Thus $\ell_1(k_x,k_y)=k_x$ is a factor of $P_2(k_x,k_y)$, so $P_2(k_x,k_y)$ is reducible.
\end{proof}

For the reducible cases, the remaining task is to classify the quadratic part in Eq.~\eqref{eq:reducible}.

\begin{lemma}
Let
\[
    Q(k_x,k_y)=ak_x^2+bk_xk_y+ck_y^2,
    \quad
    \vb*{Q} =
    \begin{pmatrix}
        a & b/2 \\
        b/2 & c
    \end{pmatrix}.
\]
Then, the factor $Q(k_x,k_y)+1=0$ is classified as follows:
\[
\begin{array}{ccl}
    \vb*{Q} \ge 0
    \longrightarrow 
    \text{no real point}, \quad
    \vb*{Q} < 0
    \longrightarrow
    \text{an ellipse}, \\ [1mm]
    \vb*{Q} \, \text{indefinite}
    \longrightarrow
    \text{a hyperbola}, \quad
    \vb*{Q} \le 0 ,\ \rank\vb*{Q}=1
    \longrightarrow 
    \text{two parallel real lines}.
\end{array}
\]
Consequently, the reducible curve $P_2(k_x,k_y)=0$ is the union of the line $k_x=0$ and one of the following real conics: the empty set, an ellipse, a hyperbola, or a pair of parallel real lines. In the hyperbola case, the line $k_x=0$ intersects the hyperbola at two finite real points if and only if $c<0$. If $c>0$, it has no finite intersection and is not parallel to either asymptotic direction. If $c=0$, it has no finite intersection but is parallel to one asymptotic direction of the hyperbola. In the rank-one negative-semidefinite case, the line $k_x=0$ intersects the parallel pair if and only if $c<0$. If $c=0$, all three real lines are mutually parallel; this degenerate subcase has $Q=a k_x^2$ with $a<0$, so $P_2$ depends only on $k_x$ and is outside the nondegenerate two-variable setting considered for the representative physical examples.
\end{lemma}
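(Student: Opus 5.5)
The plan is to diagonalize the real symmetric matrix $\vb*{Q}$ by a rotation $R$ of the $(k_x,k_y)$ plane. This turns $Q(k_x,k_y)+1=0$ into $q_1u^2+q_2v^2+1=0$, where $q_{1,2}$ are the eigenvalues of $\vb*{Q}$. Because a rotation preserves the shapes of conics and sends lines to lines, the four-way classification becomes a sign analysis of $(q_1,q_2)$.

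\textbf{Step 1: the four conic types.}
\begin{itemize}
\item If $\vb*{Q}\ge0$, the left side is at least $1$, so there is no real point.
\item If $\vb*{Q}<0$, we get $|q_1|u^2+|q_2|v^2=1$, which is an ellipse.
\item If $\vb*{Q}$ is indefinite, say $q_1>0>q_2$, we get $|q_2|v^2-q_1u^2=1$, which is a hyperbola.
\item If $\vb*{Q}\le0$ with rank one, say $q_2=0$ and $q_1<0$, we get $u=\pm|q_1|^{-1/2}$, which are two parallel real lines.
\end{itemize}
The remaining possibilities, namely rank zero and the positive-semidefinite rank-one case, are both covered by $\vb*{Q}\ge0$. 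Combining this with the factorization in Eq.~\eqref{eq:reducible} from Theorem~\ref{thm:2} gives the stated decomposition of the reducible curve: the line $k_x=0$ together with one of the four conics.

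\textbf{Step 2: how $k_x=0$ meets the conic.} Substituting $k_x=0$ into $Q+1=0$ gives $ck_y^2+1=0$, which has two real solutions exactly when $c<0$ and none otherwise. This settles the finite-intersection claims in both the hyperbola case and the parallel-pair case.

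\textbf{Step 3: asymptotic directions of the hyperbola.} These are the real null directions of $Q$, that is, the lines $Q(k_x,k_y)=0$. The line $k_x=0$ has direction $(0,1)$, and $Q(0,1)=c$. So $k_x=0$ is parallel to an asymptote exactly when $c=0$. When $c>0$ it is parallel to neither asymptote and, by Step 2, does not meet the hyperbola.

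\textbf{Step 4: the degenerate parallel-pair subcase.} In the rank-one case the parallel lines all have the kernel direction of $\vb*{Q}$. If $c=0$, then rank one forces $b=0$, since $\det\vb*{Q}=ac-b^2/4=-b^2/4=0$. Hence $Q=ak_x^2$ with $a<0$ by negative semidefiniteness. The pair is then $k_x=\pm|a|^{-1/2}$, which is parallel to $k_x=0$, and $P_2$ depends on $k_x$ alone.

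The only point needing care is Step 3. One must justify that a line avoiding the hyperbola is either parallel to an asymptote or not, using the null cone of $Q$ rather than a general intersection count. Everything else is direct substitution.
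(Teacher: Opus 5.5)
Your proposal is correct and follows essentially the same route as the paper: you diagonalize $\vb*{Q}$ by a rotation to read off the four conic types, substitute $k_x=0$ to get $ck_y^2+1=0$ for the intersection claims, use $Q(0,1)=c$ to decide whether $k_x=0$ is parallel to an asymptotic direction, and use rank one to force $b=0$ when $c=0$. The one difference is that you justify that last step explicitly via $\det\vb*{Q}=ac-b^2/4=-b^2/4$, a detail the paper only asserts.
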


\begin{proof}
The curve $P_2(k_x,k_y)=0$ is the union of the line $k_x=0$ and the conic $Q(k_x,k_y)+1=0$.
Therefore, it is enough to classify the real conic $Q+1=0$.
Since $\vb*{Q}$ is a real symmetric matrix, it can be diagonalized by an orthogonal transformation in the $(k_x,k_y)$-plane. 
We denote the coordinates after this transformation by $(k'_x,k'_y)$.
The shape of the conic is then determined by the signs of the eigenvalues of $\vb*{Q}$.

If \(\vb*{Q}\) is positive semidefinite, then
\[
    Q(k_x,k_y)+1>0
\]
for all $(k_x,k_y)$, so the conic $Q+1=0$ has no real points. Therefore, the full curve is only the line $k_x=0$.

If $\vb*{Q}$ is negative definite, in the transformed coordinates,
\[
    Q=-\alpha (k'_x)^2-\beta (k'_y)^2,
    \qquad
    \alpha,\beta>0.
\]
Therefore
\[
    Q+1=0
    \quad\Longleftrightarrow\quad
    \alpha (k'_x)^2+\beta (k'_y)^2=1,
\]
which is an ellipse.

If \(\vb*{Q}\) is indefinite, in the transformed coordinates,
\[
    Q=\alpha (k'_x)^2-\beta (k'_y)^2,
    \qquad
    \alpha,\beta>0.
\]
Thus
\[
    Q+1=0
    \quad\Longleftrightarrow\quad
    \beta (k'_y)^2-\alpha (k'_x)^2=1,
\]
which is a hyperbola. 
To determine whether this hyperbola intersects the line $k_x=0$, we return to the original coordinates and substitute $k_x=0$ into the conic:
\[
    Q(0,k_y)+1=ck_y^2+1=0.
\]
The line $k_x=0$ intersects the hyperbola in two finite real points if $c<0$. If $c>0$, it has no finite intersection and its direction is not an asymptotic direction of the hyperbola. If $c=0$, it has no finite intersection, and its direction is an asymptotic direction because $Q(0,1)=0$.

Finally, suppose that $\vb*{Q}$ is negative semidefinite of rank one. 
Then, in the transformed coordinates,
\[
    Q=-\alpha (k'_x)^2,
    \qquad
    \alpha>0.
\]
Hence
\[
    Q+1=0
    \quad\Longleftrightarrow\quad
    k'_x=\pm\frac{1}{\sqrt{\alpha}},
\]
which is a pair of parallel real lines. 
The full curve is therefore the union of the line $k_x=0$ and this parallel pair. 
To determine whether $k_x=0$ intersects the pair, we substitute $k_x=0$ into the conic:
\[
    Q(0,k_y)+1=ck_y^2+1=0.
\]
Since $\vb*{Q}\le0$, one has $c\le0$. 
If $c<0$, there are two finite real intersections. 
If $c=0$, there is no finite intersection; in fact the rank-one condition then forces $b=0$, so $Q=a k_x^2$ with $a<0$. 
Thus $P_2=k_x(1+a k_x^2)$, and the three real lines are mutually parallel. 
This last subcase contains no $k_y$-dependence and is excluded from the nondegenerate two-variable physical examples.
\end{proof}

We next consider the irreducible case $d\neq0$. 
Theorem~\ref{thm:2} implies that the curve has no singular point at a finite $(k_x,k_y)$.
Thus, the remaining distinction comes from the large-$|k_x|$ branch structure.
Since $d\neq0$, every nonzero point of the curve has $k_x\neq0$. 
Indeed, if $k_x=0$, then $P_2(0,k_y)=d k_y^3$, so $P_2=0$ implies $k_y=0$. 
For nonzero points, we can therefore introduce the slope variable $t=k_y/k_x$ and write 
\[ 
    c_3(k_x,k_y)=k_x^3 f(t), 
    \qquad 
    f(t)=a+bt+ct^2+dt^3. 
\]  
The asymptotic directions are obtained by studying the limit $|k_x|\to\infty$. Introduce $z=1/k_x$. Then $|k_x|\to\infty$ corresponds to $z\to0$, and the equation $P_2=0$ becomes 
\[ 
    f(t)+z^2=0. \label{eq:linear_cubic_infinity_local} 
\]
Therefore, if a real branch approaches a finite limiting slope $t\to r$ at large $|k_x|$, then $r$ must satisfy 
\[ 
    f(r)=0. 
\] 
Thus, the possible real asymptotic directions are determined by the real roots of $f(t)$, and the root multiplicity determines the local branch type.

\begin{lemma}
Let $r$ be a simple real root of $f(t)$, namely $f(r)=0$.
Then the curve $P_2(k_x,k_y)=0$ has a real branch with an asymptotic direction $k_y/k_x\to r$ as $|k_x|\to\infty$. 
Moreover, this branch has the ordinary affine asymptote
\[
    k_y=rk_x.
\]
\label{lem:simple_root_asymptote}
\end{lemma}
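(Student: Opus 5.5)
Near infinity I will work in the local chart $(t,z)$, $z=1/k_x$, where the curve reads $F(t,z)\coloneqq f(t)+z^2=0$ as in Eq.~\eqref{eq:linear_cubic_infinity_local}. The whole argument is an implicit-function construction at the point $(t,z)=(r,0)$.

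At $(r,0)$ we have $F(r,0)=f(r)=0$ and $\partial_t F(r,0)=f'(r)\neq0$, since $r$ is a simple root. The implicit function theorem therefore gives a real-analytic function $t(z)$, defined for $|z|$ small, with $t(0)=r$ and $F(t(z),z)=0$. Because $F$ depends on $z$ only through $z^2$, uniqueness forces $t(-z)=t(z)$. Expanding,
\[
    t(z)=r-\frac{z^2}{f'(r)}+O(z^4).
\]
For each small $z\neq0$, the point $k_x=1/z$, $k_y=t(z)/z$ is real and lies on $P_2=0$. The case $z>0$ gives $k_x\to+\infty$ and $z<0$ gives $k_x\to-\infty$, and in both cases $k_y/k_x=t(z)\to r$. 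This establishes the real branch with asymptotic direction $r$.

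For the asymptote I compute the offset
\[
    k_y-rk_x=\frac{t(z)-r}{z}=-\frac{z}{f'(r)}+O(z^3),
\]
which tends to $0$ as $z\to0$. So the branch approaches the line $k_y=rk_x$, and its intercept is zero. The first-order correction $(t-r)\sim z^2$ comes from the absence of a $z$-linear term in $F$, and this is exactly why the asymptote passes through the origin.

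The main subtlety is making sure the asymptote is the affine line $k_y=rk_x$ itself and not merely a line parallel to it. This hinges entirely on the vanishing of the $O(z)$ term in $t(z)$, which holds because the linear part $\ell_1=k_x$ contributes $z^{2}$ after dividing the equation by $k_x^3$. I will therefore write the division step explicitly: $P_2=k_x^3 f(t)+k_x=0$ is equivalent, for $k_x\neq0$, to $f(t)+k_x^{-2}=0$. Every nonzero point of the curve has $k_x\neq0$ when $d\neq0$, as shown above, so this division is legitimate.
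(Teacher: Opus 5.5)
Your proof is correct and follows the paper's argument: you apply the implicit function theorem to $f(t)+z^2=0$ at $(r,0)$, expand $t(z)=r-z^2/f'(r)+O(z^4)$, and read off the offset $k_y-rk_x=-z/f'(r)+O(z^3)\to 0$. Your extra details, namely the evenness $t(-z)=t(z)$, treating both signs of $k_x$, and justifying the division by $k_x^3$, are valid refinements of that same argument.
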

\begin{proof}
Since $r$ is a simple root, $f'(r)\neq0$. 
By the implicit function theorem, there is a real analytic solution $t=t(z)$ near $z=0$ with $t(0)=r$. 
Expanding,
\[
    t(z) = r-\frac{z^2}{f'(r)}+O(z^4).
\]
Returning to affine coordinates, $z=1/k_x$ and $t=k_y/k_x$, so
\[
    k_y = rk_x-\frac{1}{f'(r)k_x}+O(k_x^{-3}).
\]
Hence
\[
    k_y-rk_x\to0
\]
as $|k_x|\to\infty$, which means that the branch has the affine asymptote $k_y=rk_x$.
\end{proof}
Therefore, each simple root of $f(t)$ gives one ordinary affine asymptote. 
Multiple roots require additional analysis, because a multiple root may give two asymptotes, no real branch, or a branch with only an asymptotic direction.

\begin{lemma}
Let $r$ be a real double root of $f(t)$, namely $f(r)=0$, $f'(r)=0$, and $f''(r)\neq0$.
Then the real branches of $P_2=0$ approaching the large-$|k_x|$ direction $k_y/k_x\to r$ are classified by the sign of $f''(r)$. 
If $f''(r)<0$, then two real branches approach this direction, and their affine asymptotes are
\[
    k_y-rk_x=\pm\left(-\frac{1}{2}f''(r)\right)^{-1/2}.
\]
If $f''(r)>0$, then no real branch approaches this direction.
\label{lem:double_root_asymptote}
\end{lemma}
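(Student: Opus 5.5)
The plan is to work in the same chart at infinity as in Lemma~\ref{lem:simple_root_asymptote}, with $z=1/k_x$ and $t=k_y/k_x$. There the curve reads $f(t)+z^2=0$, and we expand $f$ around the double root. Write $s=t-r$. Since $f(r)=f'(r)=0$ and $f''(r)\neq0$, Taylor's theorem gives
\[
    f(r+s)=\tfrac12 f''(r)\,s^2\bigl(1+O(s)\bigr),
\]
where the $O(s)$ term is the cubic contribution $d s^3$ divided by $\tfrac12 f''(r)s^2$. Any real branch with $k_y/k_x\to r$ corresponds to $(z,s)\to(0,0)$ on the real locus of $\tfrac12 f''(r)s^2(1+O(s))+z^2=0$.

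\textbf{Sign test.} If $f''(r)>0$, then for small nonzero $s$ the first term is strictly positive, so the sum can vanish only when $s=0$ and $z=0$. Since $z=0$ is excluded for affine points, no real branch approaches this direction. If $f''(r)<0$, set $g(s)=s\sqrt{1+O(s)}$, which is real analytic near $0$ with $g'(0)=1$. The equation becomes
\[
    \bigl(\sqrt{-\tfrac12 f''(r)}\;g(s)\bigr)^2=z^2,
\]
which factors into the two analytic equations $\sqrt{-\tfrac12 f''(r)}\,g(s)=\pm z$. Each can be solved by the implicit function theorem, or by inverting $g$. This gives two real analytic branches
\[
    s_\pm(z)=\pm\left(-\tfrac12 f''(r)\right)^{-1/2} z+O(z^2).
\]
The two branches are distinct because their leading coefficients differ.

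\textbf{Back to affine coordinates.} Using $s=k_y/k_x-r$ and $z=1/k_x$, we get
\[
    k_y-rk_x=k_x s_\pm(1/k_x)=\pm\left(-\tfrac12 f''(r)\right)^{-1/2}+O(1/k_x).
\]
So each branch has the stated affine asymptote, and both lines are parallel to the direction $k_y=rk_x$.

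\textbf{Main obstacle.} The delicate point is handling the higher-order correction $O(s^3)$ so that the square-root factorization is legitimate. I would handle it by noting that $f$ is a cubic, so
\[
    f(r+s)=s^2\bigl(\tfrac12 f''(r)+d s\bigr)
\]
holds exactly. The factor $\tfrac12 f''(r)+ds$ is nonvanishing near $s=0$ and has a fixed sign there, so its square root is analytic. I would also check that the branches found are the only ones approaching the direction $r$. This follows because the factorization exhausts the real zero set in a neighborhood of $(0,0)$.
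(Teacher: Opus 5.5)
Your proposal is correct and follows the paper's proof: the same chart $z=1/k_x$, $t=k_y/k_x$, the same local equation $\tfrac12 f''(r)u^2+z^2+O(u^3)=0$ near the double root, and the same sign dichotomy on $f''(r)$, giving the two branches $u=\pm(-\tfrac12 f''(r))^{-1/2}z+O(z^2)$ and hence the stated asymptotes. Where you differ is in rigor: you use the exact identity $f(r+s)=s^2(\tfrac12 f''(r)+ds)$ to make the square-root factorization legitimate and to show the two branches exhaust the real zero set near $(0,0)$, which tightens the paper's more heuristic treatment of the $O(u^3)$ remainder.
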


\begin{proof}
Set $u=t-r$.
Since $r$ is a double root,
\[
    f(r+u)=\frac{1}{2} f''(r) u^2+O(u^3).
\]
Therefore, the local equation near $(u,z)=(0,0)$ becomes
\[
    \frac{1}{2} f''(r) u^2+z^2+O(u^3)=0.
\]

If $f''(r)<0$, the leading equation
\[
    \frac{1}{2} f''(r) u^2+z^2=0
\]
has two real branches,
\[
    u=\pm\left(-\frac{1}{2}f''(r)\right)^{-1/2}z+O(z^2).
\]
Returning to the original variables, we find that the two real branches have affine asymptotes
\[
    k_y-rk_x=\pm\left(-\frac{1}{2}f''(r)\right)^{-1/2}.
\]

If $f''(r)>0$, then the quadratic part $f''(r) u^2/2+z^2$ is positive definite. 
The higher-order term \(O(u^3)\) is negligible in a sufficiently small neighborhood of \((u,z)=(0,0)\), so the equation has no nontrivial real solution near this point. 
Therefore, no real branch approaches the direction \(k_y/k_x\to r\).
\end{proof}

\begin{lemma}
Let $r$ be a real triple root of $f(t)$. 
Since $f(t)$ is cubic, this means
\[
    f(t)=d(t-r)^3,
    \qquad
    d\neq0 .
\]
Then the curve $P_2(k_x,k_y)=0$ has a real branch with asymptotic direction $k_y/k_x\to r$ as $|k_x|\to\infty$. 
Moreover, along this branch, $k_y-rk_x$ does not approach a finite constant. 
Therefore, the branch has a real asymptotic direction, but no finite-offset affine asymptote of slope $r$.
\label{lem:triple_root_asymptotic_direction}
\end{lemma}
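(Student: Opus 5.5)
Following the strategy of Lemmas~\ref{lem:simple_root_asymptote} and~\ref{lem:double_root_asymptote}, I will work in the chart at infinity. Set $u=t-r$ and $z=1/k_x$. Because $f(t)=d(t-r)^3$ exactly, the local equation Eq.~\eqref{eq:linear_cubic_infinity_local} has no higher-order remainder:
\[
    d\,u^3+z^2=0 .
\]
Unlike the simple- and double-root cases, this can be solved in closed form. For every real $z$ there is exactly one real $u$, namely
\[
    u(z)=-\left(z^2/d\right)^{1/3},
\]
with the real cube root taken. This $u(z)$ tends to $0$ as $z\to0$, and it does so from both sides $z\to0^\pm$, that is, for $k_x\to\pm\infty$. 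This gives a real branch with $t=k_y/k_x\to r$, which proves existence of the asymptotic direction. I will also note that $k_x\neq0$ on nonzero points of the curve, as shown earlier, so the substitution is legitimate on the whole curve away from the origin.

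Next I return to affine coordinates. Since $k_y-rk_x=k_x u=u/z$, I get
\[
    k_y-rk_x=-\operatorname{sgn}(z)\,|z|^{-1/3}d^{-1/3},
\]
or equivalently $k_y-rk_x=-d^{-1/3}\,k_x^{1/3}$ up to sign conventions. I will check the sign bookkeeping directly in the original equation. Substituting $k_y=rk_x+w$ into $P_2=0$ and using $c_3(k_x,k_y)=k_x^3 f(k_y/k_x)=d\,(k_y-rk_x)^3$, the equation becomes
\[
    d\,w^3+k_x=0,
\]
so $w=-(k_x/d)^{1/3}$. This is the cleanest route, and I will present it as the main computation. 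It shows that the real curve is precisely the graph $k_y=rk_x-(k_x/d)^{1/3}$, a single smooth branch. Hence $k_y/k_x=r+O(|k_x|^{-2/3})\to r$, while $|k_y-rk_x|=|k_x/d|^{1/3}\to\infty$. So $k_y-rk_x$ has no finite limit, and there is no affine asymptote of slope $r$.

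I expect no real obstacle. The only subtle point is to justify that $c_3(k_x,k_y)=d(k_y-rk_x)^3$ is an identity of binary forms and not merely an identity on the chart $k_x\neq0$. This follows from homogeneity: both sides agree after dividing by $k_x^3$ on the dense set $k_x\neq0$, hence everywhere. I will also remark that, since $f$ has degree three, there are no other real roots. The triple-root branch is therefore the entire real curve, and this global statement strengthens the lemma.
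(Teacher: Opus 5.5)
Your proof is correct and follows essentially the same route as the paper, which also uses the fact that the triple root leaves no remainder in $d u^3+z^2=0$, solves this equation explicitly at infinity, and concludes that $|k_y-rk_x|\sim|k_x|^{1/3}$. Your affine substitution $w=k_y-rk_x$, giving $d w^3+k_x=0$, is that same equation multiplied by $k_x^3$; it has the small bonus of exhibiting the entire real curve as the single smooth branch $k_x=-dw^3$, which covers both $k_x\to\pm\infty$ exactly rather than only asymptotically near $z=0$.
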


\begin{proof}
Set $u=t-r$, and then $f(t)=d(t-r)^3=d u^3$.
Therefore, the local equation becomes $d u^3+z^2=0$.
For real $z$ sufficiently close to zero, this equation has a real solution $u=-\operatorname{sgn}(d)\,|d|^{-1/3}|z|^{2/3}$.
Hence $u\to0$ as $z\to0$. 
Therefore, the curve has a real asymptotic direction $k_y/k_x\to r$.

However, $k_y-rk_x=u k_x$.
Using $z=1/k_x$ and $u\sim |z|^{2/3}$, we obtain
\[
    |k_y-rk_x|
    \sim
    |z|^{2/3}|z|^{-1}
    =
    |z|^{-1/3}
    =
    |k_x|^{1/3}.
\]
Therefore $k_y-rk_x$ does not approach a finite constant as $|k_x|\to\infty$. 
The branch has a real asymptotic direction, but not in general a finite-offset affine asymptote.
\end{proof}
Combining Lemmas~\ref{lem:simple_root_asymptote}, \ref{lem:double_root_asymptote}, and \ref{lem:triple_root_asymptotic_direction}, the large-$|k_x|$ behavior in the irreducible case is completely determined by the real roots of $f(t)$ and their multiplicities. 
A simple real root gives one ordinary affine asymptote. 
A double real root gives two affine asymptotes when $f''(r)<0$, while it gives no real branch approaching that direction when $f''(r)>0$. 
A triple real root gives one real asymptotic direction, but no finite-offset affine asymptote.
Together with the reducible cases classified above, this completes the classification of the linear-(linear-plus-cubic) case summarized in Fig.~\ref{fig:linear-linear-plus-cubic-flowchart}.

\begin{figure}[h]
\centering
\resizebox{0.98\linewidth}{!}{\begin{tikzpicture}[
    >=Latex,
    line width=0.9pt,
    font=\small,
    proc/.style={
        draw,
        rounded corners=2pt,
        align=center,
        inner xsep=5pt,
        inner ysep=5pt,
        minimum height=8mm
    },
    decision/.style={
        draw,
        diamond,
        aspect=2.2,
        align=center,
        inner sep=1.5pt
    },
    arrow/.style={->, line width=0.9pt},
    lab/.style={fill=white, inner sep=1pt, font=\scriptsize}
]

\definecolor{flowblue}{RGB}{220,238,255}
\definecolor{flowyellow}{RGB}{255,241,184}
\definecolor{flowgreen}{RGB}{223,243,223}
\definecolor{flowgreenlight}{RGB}{237,248,237}
\definecolor{floworange}{RGB}{255,226,204}
\definecolor{floworangelight}{RGB}{255,240,227}
\definecolor{flowviolet}{RGB}{235,230,255}

\node[proc, fill=flowblue, text width=42mm] (start) at (0,0)
{Two constraints:\\[1mm]
$P_1(\vb*{k})=0,\quad P_2(\vb*{k})=0$};

\node[proc, fill=flowblue, text width=42mm] (plane) at (0,-1.7)
{Choose coordinates so that\\[1mm]
$P_1(\vb*{k})=k_z$};

\node[proc, fill=flowblue, text width=70mm] (curve) at (0,-3.9)
{Restrict to $k_z=0$ and choose in-plane coordinates so that\\[1mm]
$\ell_1(k_x,k_y)=k_x$,\\[1mm]
$P_2(k_x,k_y)=k_x\!\left(1+a k_x^2+b k_xk_y+c k_y^2\right)+d k_y^3$};

\node[decision, fill=flowyellow, text width=30mm] (dzero) at (0,-6.4)
{Is $d=0$?};

\draw[arrow] (start.south) -- (plane.north);
\draw[arrow] (plane.south) -- (curve.north);
\draw[arrow] (curve.south) -- (dzero.north);

\node[proc, fill=flowgreenlight, text width=60mm] (redform) at (-6.2,-8.0)
{Reducible case:\\[1mm]
$P_2=k_x\{Q(k_x,k_y)+1\}$,\\[1mm]
the curve is the union of\\[1mm] 
$k_x=0$ and $Q(k_x,k_y)+1=0$.};

\node[proc, fill=flowgreenlight, text width=86mm] (qclass) at (-6.2,-11.4)
{Classify the shape of $Q(k_x,k_y)+1=0$:\\[1mm]
$\mathbf{Q}\ge0$: a single line,\\[1mm]
$\mathbf{Q}<0$: a line + an ellipse\\[1mm]
$\mathbf{Q}$ indefinite: a line + hyperbola\\[1mm]
\hspace{4mm}($c<0$: two intersections; $c>0$: no finite intersection),\\[1mm]
\hspace{4mm}($c=0$: no finite intersection, parallel to one asymptotic direction),\\[1mm]
$\mathbf{Q}\le0$, $\operatorname{rank}\mathbf{Q}=1$: a line + two parallel lines\\[1mm]
\hspace{4mm}($c<0$: two intersections; $c=0$: three parallel lines).};

\draw[arrow]
    (dzero.west)
    -- node[lab, above] {Yes}
    (redform.north |- dzero.west)
    -- (redform.north);

\draw[arrow] (redform.south) -- (qclass.north);

\node[proc, fill=floworangelight, text width=55mm] (slope) at (6.2,-7.7)
{Irreducible case:\\[1mm]
set $t=k_y/k_x$,\\[1mm]
$f(t)=a+bt+ct^2+dt^3$};

\node[decision, fill=floworange, text width=34mm] (mult) at (6.2,-10.3)
{Multiplicity of a real root $r$ of $f(t)$?};

\node[proc, fill=floworangelight, text width=34mm] (simple) at (2.2,-12.4)
{One real branch,\\
asymptote $k_y=rk_x$};

\node[proc, fill=floworangelight, text width=46mm] (triple) at (10.2,-12.4)
{One real asymptotic direction,\\
no finite-offset affine asymptote};

\node[decision, fill=floworange, text width=28mm] (signfpp) at (6.2,-14.0)
{Sign of $f''(r)$?};

\node[proc, fill=floworangelight, text width=34mm] (doubleminus) at (3.4,-15.8)
{Two real branches
};

\node[proc, fill=floworangelight, text width=34mm] (doubleplus) at (9.0,-15.8)
{
no real branch};

\draw[arrow]
    (dzero.east)
    -- node[lab, above] {No}
    (slope.north |- dzero.east)
    -- (slope.north);

\draw[arrow] (slope.south) -- (mult.north);

\draw[arrow]
    (mult.west)
    -- node[lab, above] {simple}
    (simple.north |- mult.west)
    -- (simple.north);

\draw[arrow]
    (mult.south)
    -- node[lab, right] {double}
    (signfpp.north);

\draw[arrow]
    (mult.east)
    -- node[lab, above] {triple}
    (triple.north |- mult.east)
    -- (triple.north);

\draw[arrow]
    (signfpp.west)
    -- node[lab, above] {$<0$}
    (doubleminus.north |- signfpp.west)
    -- (doubleminus.north);

\draw[arrow]
    (signfpp.east)
    -- node[lab, above] {$>0$}
    (doubleplus.north |- signfpp.east)
    -- (doubleplus.north);

\end{tikzpicture}}
\caption{Flowchart for the linear-(linear-plus-cubic) classification. 
After choosing coordinates so that $P_1(\vb*{k})=k_z$ and $\ell_1(k_x,k_y)=k_x$, the problem reduces to the plane curve
$P_2(k_x,k_y)=k_x(1+a k_x^2+b k_xk_y+c k_y^2)+d k_y^3$.
The case $d=0$ is reducible and is classified by the quadratic factor $Q(k_x,k_y)+1$.
The case $d\neq0$ is irreducible and is classified by the multiplicity of the real roots of $f(t)=a+bt+ct^2+dt^3$.}
\label{fig:linear-linear-plus-cubic-flowchart}
\end{figure}

\subsection{The \(\ell_1=0\) branch}
The preceding linear-cubic classification assumes that the linear part of the second constraint remains nonzero after restricting to the plane defined by the first linear constraint. Equivalently, if the first constraint is $P_1=0$ and the linear part of the second constraint is $\ell$, the classification assumes
\begin{equation}
    \ell_1:=\ell|_{P_1=0}\neq 0 .
\end{equation}
This assumption is necessary in order to choose in-plane coordinates with \(\ell_1=k_x\) and to obtain the normal form displayed above.

There is, however, a separate degenerate case in which
\begin{equation}
    \ell_1=0 .
\end{equation}
Geometrically, this means that the two linear forms $P_1$ and $\ell$ are proportional. After restricting to the plane $P_1=0$, the second constraint has no linear part and reduces to a homogeneous binary cubic,
\begin{equation}
    P_2|_{P_1=0}=c_3(x,y),
\end{equation}
where $c_3$ is the restriction of the cubic homogeneous part to the plane $P_1=0$.

Since $c_3$ is homogeneous, its real zero set is a union of real lines through the origin. More explicitly, over $\mathbb R$ the binary cubic factors as
\begin{equation}
    c_3(x,y)=\prod_j L_j(x,y)^{m_j}\, Q(x,y),
\end{equation}
where the $L_j$ are distinct real linear factors, $m_j$ are their multiplicities, and $Q$ is either absent or a positive/negative definite quadratic factor with no real linear zero. Therefore the real set-theoretic zero locus is
\begin{equation}
    \{(x,y):c_3(x,y)=0\}
    =\bigcup_j \{(x,y):L_j(x,y)=0\} .
\end{equation}
Thus the common zero set of the original three-dimensional constraints is a union of straight lines through the origin inside the plane $P_1=0$.

Equivalently, the classification is determined by the real projective roots of the binary cubic $c_3$:
\begin{itemize}
    \item If $c_3$ has three distinct real projective roots, the real zero set consists of three distinct lines through the origin.
    \item If $c_3$ has one simple real projective root and one complex-conjugate pair, the real zero set consists of one line through the origin.
    \item If $c_3$ has repeated real roots, the real set-theoretic support is still the corresponding union of distinct real lines, while the multiplicities record tangent or higher-order contact data that are invisible in the set-theoretic support.
\end{itemize}

This $\ell_1=0$ branch does not produce a genuinely curved planar locus. It belongs to the homogeneous straight-line case rather than to the genuine linear-plus-cubic classification with $\ell_1\neq0$.
\section{Worm algorithm}

Here we describe the Monte Carlo update used for the lattice Hamiltonians in Eq.~\eqref{eq:lattice_rank_one_hamiltonian}. The construction applies to any two-component lattice Gauss's law of the form
\begin{equation}
    Q_{\vb*r}
    =(\Delta_1 S^z_1)_{\vb*r}
    +(\Delta_2 S^z_2)_{\vb*r},
    \qquad
    H=J\sum_{\vb*r}Q_{\vb*r}^2,
    \label{eq:SM_general_lattice_gauss_law}
\end{equation}
where \(S^z_{a,\vb*r}\) is identified with the electric-field component \(E_{a,\vb*r}\), and \(\Delta_a\) are translation-invariant finite-difference operators on the periodic lattice. Since the \(\Delta_a\) are polynomials in lattice translations, they commute with one another.

The elementary worm move is generated by a compactly supported scalar auxiliary field \(A_{\vb*r}\).  We propose a change of the longitudinal spin components
\begin{equation}
    \delta S^z_{1,\vb*r}=-(\Delta_2 A)_{\vb*r},
    \qquad
    \delta S^z_{2,\vb*r}=(\Delta_1 A)_{\vb*r}.
    \label{eq:SM_worm_null_update}
\end{equation}
This update lies in the null space of Gauss's law operator:
\begin{align}
    \delta Q_{\vb*r}
    &=
    \Delta_1\delta S^z_{1,\vb*r}
    +\Delta_2\delta S^z_{2,\vb*r} \nonumber\\
    &=
    -(\Delta_1\Delta_2 A)_{\vb*r}
    +(\Delta_2\Delta_1 A)_{\vb*r}
    =0 .
    \label{eq:SM_worm_constraint_preservation}
\end{align}
Thus the move exactly preserves \(Q_{\vb*r}\), and hence preserves the energy of the pure constraint Hamiltonian, up to floating-point roundoff.

In practice, one Monte Carlo attempt proceeds as follows.  A lattice center \(\vb*r_0\) is chosen uniformly, a compact local pattern \(A^{(0)}_{\vb*r-\vb*r_0}\) is chosen from a symmetric set of patterns, and an amplitude \(\alpha\) is drawn from a symmetric distribution, \(\alpha\in[-\alpha_{\max},\alpha_{\max}]\).  The auxiliary field is \(A_{\vb*r}=\alpha A^{(0)}_{\vb*r-\vb*r_0}\), and Eq.~\eqref{eq:SM_worm_null_update} gives the proposed new longitudinal components
\begin{equation}
    S^{z\prime}_{a,\vb*r}=S^z_{a,\vb*r}+\delta S^z_{a,\vb*r},
    \qquad a=1,2 .
    \label{eq:SM_worm_proposed_sz}
\end{equation}
If any proposed value violates the microscopic spin constraint, \(|S^{z\prime}_{a,\vb*r}|>1\), the whole move is rejected.  Otherwise it is accepted with Metropolis probability
\begin{equation}
    p_{\rm acc}
    =
    \min\{1,\exp[-\beta\Delta H]\},
    \qquad
    \Delta H
    =
    J\sum_{\vb*r}
    \left[(Q_{\vb*r}+\delta Q_{\vb*r})^2-Q_{\vb*r}^2\right].
    \label{eq:SM_worm_acceptance}
\end{equation}
For the exact null updates above, \(\Delta H=0\), so the only rejection in the \(T=0\) constrained sampler comes from the bound \(|S^z_{a,\vb*r}|\le1\).  A sweep consists of \(L^3\) such local attempts.

After an accepted update, the transverse components are refreshed so that each microscopic spin remains an \(O(3)\) unit vector.  For every updated site and sublattice, we draw an independent azimuthal angle
\begin{equation}
    \phi_{a,\vb*r}\in[0,2\pi)
\end{equation}
uniformly and set
\begin{equation}
    S^x_{a,\vb*r}
    =
    \sqrt{1-(S^z_{a,\vb*r})^2}\cos\phi_{a,\vb*r},
    \qquad
    S^y_{a,\vb*r}
    =
    \sqrt{1-(S^z_{a,\vb*r})^2}\sin\phi_{a,\vb*r}.
    \label{eq:SM_worm_transverse_refresh}
\end{equation}
This preserves the correct local \(O(3)\) measure, since \(d\Omega=d\phi\,dS^z\).
Equivalently, a single unconstrained \(O(3)\) spin has a flat measure in \(S^z=\cos\theta\) on the interval \([-1,1]\); the nonuniform distribution of \(S^z\) in the constrained ensemble is generated by the Gauss's law constraint and the hard bounds, not by an additional local Jacobian.  Therefore the symmetric additive proposal in \(S^z\), together with the Metropolis rule above, samples the desired measure, while the random azimuthal refresh samples the conditional transverse direction at fixed \(S^z\).

The proposal is symmetric: the inverse of the move generated by \(A_{\vb*r}\) is generated by \(-A_{\vb*r}\), and the center, pattern, and amplitude distributions are chosen independently of the current configuration.  Together with the Metropolis rule in Eq.~\eqref{eq:SM_worm_acceptance}, this gives detailed balance.  In the \(T=0\) runs, we initialize from \(S^z_{a,\vb*r}=0\), so that \(Q_{\vb*r}=0\), and the worm updates sample within the constrained ground-state manifold.  The measured structure-factor matrix is
\begin{equation}
    S_{ab}(\vb*{k})
    =
    \frac{1}{N}
    \left\langle
    S^z_a(\vb*{k})S^z_b(-\vb*{k})
    \right\rangle,
    \qquad
    S^z_a(\vb*{k})=\sum_{\vb*r}e^{-i\vb*{k}\cdot\vb*r}S^z_{a,\vb*r},
    \label{eq:SM_worm_structure_factor}
\end{equation}
For the symmetric channel used in Models II and III, we define
\begin{equation}
    C_\Sigma(\vb*{k})
    =
    S_{11}(\vb*{k})+S_{22}(\vb*{k})
    +2\,\mathrm{Re}\,S_{12}(\vb*{k}).
    \label{eq:SM_worm_symmetric_channel}
\end{equation}

\end{document}